\definecolor{rosso}{rgb}{.99,.5,.5}
\definecolor{dgreen}{rgb}{0,.7,0}
\newcommand{\url}[1]{\textsf{#1}}
\newcommand{\checkDtmc}{\textsf{Check}}
\newcommand{\checkPath}{\textsf{CheckPath}}
\newcommand{\relop}{\bowtie}
\def\sosrule#1#2{\frac{\,\,\,\,#1\,\,\,\,}{\,\,\,\,#2\,\,\,\,}}
\def\trans#1{\stackrel{#1}{\rightarrowtail}}
\definecolor{apricot}{RGB}{249,243,221}
\def\RED#1{\textcolor{red}{#1}}
\def\endstat{\hfill\bullet}
\newcommand{\nxt}{{\mathsf{next}}}
\newcommand{\lab}{{\mathsf{lab\_eval}}}
\newcommand{\AND}{\mbox{$\;\;\wedge\;\;$}}
\newcommand{\All}{\mbox{$\forall\,$}}
\newcommand{\Ex}{\mbox{$\exists\,$}}
\newcommand{\true}{\mbox{tt}}
\def\Prob#1#2#3{{\calP}_{#1 #2}(#3)}
\def\rU#1{\, {\calU}^{#1} \,}
\def\FlyFast#1{{\sf FlyFast#1}}
\newcommand{\SET}[1]{\{#1\}}
\newcommand{\aN}{^{(N)}}
\def\calA{{\cal A}}
\def\calD{{\cal D}}
\def\calH{{\cal H}}
\def\calM{{\cal M}}
\def\calP{{\cal P}}
\def\calS{{\cal S}}
\def\calU{{\cal U}}
\def\calX{{\cal X}}
\renewcommand{\iff}{\mbox{ iff }}
\newcommand{\at}{\mbox{\sl L}}
\newcommand{\Paths}{\mbox{\sl Paths}}
\renewcommand{\Pr}{{\mathbb{P}}}
\def\calS{{\cal S}}
\def\act#1{{\calA}_{#1}} 	
\def\bop{\langle \mbox{bop}\rangle}	
\def\dfas{:=}			
\def\dirac{\boldsymbol{\delta}}
\def\expsem{\mathscr{E}}  
\def\frc{\textsf{frc}\,}
\def\hdtm{\mathbf{H}}	
\def\map{\calH}
\def\nats{\mathbb{N}}	
\def\lnxt{\calX \,}			
\def\otm{\mathbf{K}}		
\def\os{\Delta}			
\def\relop{\langle \mbox{relop}\rangle}	
\def\semint#1#2{#1[\![#2]\!]}
\def\sep{|}				
\def\sc#1{\calS_{#1}}		
\def\slf{\ell}			
\def\sls{\mathscr{P}}			
\def\stm{\mathbf{S}}		
\def\sdtm{\mathbf{P}}	
\def\reals{\mathbb{R}}	
\def\uop{\langle \mbox{uop}\rangle}	
\def\us#1{\calU^#1}
\def\vct#1{{\mathbf #1}}
\def\deriv{\noindent\hspace*{.20in}\vspace{0.1in}}
\def\hint#1#2{\newline#1\hspace*{.25in}\{$#2$\}\vspace{0.1in}\newline\hspace*{.20in}\vspace{0.1in}}
\newtheorem{nfa}{\RED{Note for Authors}}[section]
\newif\ifTR
\begin{document}

\hyphenation{System-of-In-de-pend-ent}

\mainmatter




\ifTR
\title{On-the-fly Fast Mean-Field Model-Checking\\Extended version
}
\titlerunning{On-the-fly Fast Mean-Field  Model-Checking
\thanks{This research has been partially funded by the EU 
projects ASCENS (nr. 257414) and QUANTICOL (nr. 600708), 
and the IT MIUR project CINA.}}
\else
\title{On-the-fly Fast Mean-Field Model-Checking
\thanks{This research has been partially funded by the EU 
projects ASCENS (nr. 257414) and QUANTICOL (nr. 600708), 
and the IT MIUR project CINA.}}
\titlerunning{On-the-fly Fast Mean-Field  Model-Checking}
\fi

\author{Diego~Latella\inst{1} \and Michele Loreti\inst{2}  \and  Mieke~Massink\inst{1}}
\institute{Istituto di Scienza e Tecnologie dell'Informazione \lq A.~Faedo\rq, CNR, Italy \and
Universit\`a di Firenze, Italy
}

\authorrunning{Latella et al.}

\maketitle

\noindent

\begin{abstract}
A novel, scalable, on-the-fly model-checking procedure is presented to verify bounded PCTL properties of selected individuals in the context of very large systems of independent interacting objects. The proposed procedure combines on-the-fly model checking techniques with deterministic mean-field approximation in discrete time. The asymptotic correctness of the procedure is shown and some results of the application of a prototype implementation of the \FlyFast{} model-checker are presented.
\end{abstract}

\begin{keywords}
Probabilistic Model-Checking; 
On-the-fly Model-Checking; 
Mean-Field Approximation; 
Discrete Time Markov Chains. 

\end{keywords}


\renewcommand{\RED}[1]{#1}

\section{Introduction}
\label{Introduction}


Model checking has been widely recognised as a powerful approach to the automatic verification of concurrent and distributed systems. It consists of an efficient procedure that, given an abstract model $\calM$ of the system, decides whether $\calM$ satisfies a logical formula $\Phi$, typically drawn from a temporal logic. Despite the success of model-checking 
procedures, their scalability have always been a concern due to the potential combinatorial explosion of the state space that needs to be searched. 
%

The main contribution of this paper is a novel model-checking procedure, based on an original combination of local, on-the-fly model-checking techniques and mean field approximation in discrete time~\cite{BMM07}. The procedure can be used to verify bounded PCTL~\cite{Hansson1994} properties of selected individuals in the context of systems consisting of a large number of similar but independent interacting objects. It is scalable in the sense that it is insensitive to the size of the population the system consists of. 
The asymptotic correctness of the model-checking procedure is proven and a prototype implementation of the model-checker, \FlyFast{,} is applied to a bench-mark example from computer epidemics that was also studied extensively in~\cite{Bo+13}, to which we refer for a detailed discussion. To the best of our knowledge, this is the first implementation of an {\em on-the-fly mean field} model-checker for {\em discrete time, probabilistic, time-synchronous} models.

Following the approach in~\cite{BMM07} we consider a model for interacting objects, where the evolution of each object is given by a finite state discrete time Markov chain. The transition matrix of each object may depend on the distribution of states of all objects in the system. Each object can be in one of its local states at any point in time and all objects proceed in discrete time and in a clock-synchronous fashion. When the number of objects is large, the overall behaviour of the system in terms of its `occupancy measure', i.e. the fraction of objects that are in a particular local state at a particular time, can be approximated by the (deterministic) solution of a {\em difference equation} which is called the `mean field'\footnote{The term `mean field' has its origin in statistical physics and is sometimes used with slightly different meaning in the literature. Here we intend the meaning as defined in~\cite{BMM07}.}. This convergence result has been extended in~\cite{BMM07} to obtain a `fast' way to stochastically simulate the evolution of a selected, limited number of specific objects in the context of the overall behaviour of the population. 

We show that the deterministic iterative procedure of~\cite{BMM07}, to compute the average overall behaviour of the system and that
of individual objects in the context of the overall system, combines well with an {\em on-the-fly} probabilistic model-checking procedure for the verification of bounded PCTL formulas addressing selected objects of interest\footnote{Note that the transition probabilities of these selected objects at time $t$ may depend on the occupancy measure of the system at $t$ and therefore also the truth-values of the formulas may vary with time.}. An on-the-fly recursive approach also provides a natural way to address nested path formulae and time-varying truth values of such formulae. The algorithm presented in this paper is parametric w.r.t. the semantic interpretation of the language. In particular we present two different interpretations; one based on the standard, {\em exact probabilistic semantics} of a simple probabilistic population description language, and the other one on the {\em mean-field approximation in discrete time} of such a semantics. The latter is the main contribution of the current paper. The considered PCTL formulae can be extended along the lines proposed in~\cite{Ko+12,Ko+13} with properties that address the overall status of the system. We show a simple instance of that.

The models we consider are also known as SIO-models (System of Independent Objects)~\cite{Bo+13}. These are time-synchronous models in which each object performs a probabilistic step in each discrete time unit, possibly looping to the same state. This is a class of models that is frequently encountered in various research disciplines ranging from telecommunication to computational biology. The objects interact in an indirect way via the global state of the overall system.  

\section{Related Work}

Traditionally, model checking approaches are divided into two broad categories: {\em global} approaches that determine the set of {\em all} states in $\calM$ that satisfy $\Phi$, and {\em local} approaches that, given a state $s$ in $\calM$, determine whether $s$ satisfies $\Phi$~\cite{Courcoubetis1992,BCG95}. 

Global symbolic model checking algorithms are popular because of their computational efficiency and can be found in many model checkers, both in a qualitative (see e.g. ~\cite{Clarke1986})  and in a stochastic setting (see e.g. ~\cite{Ba+03,KNP04}). The set of states that satisfy a formula is constructed recursively in a {\em bottom-up} fashion following the syntactic structure of the formula. Depending on the particular formula to verify, usually the underlying model can be reduced to fewer states before the algorithm is applied. Moreover, as is shown e.g. in~\cite{Ba+03} for stochastic model checking, the model checking algorithm can be reduced to combinations of existing well-known and optimised algorithms for CTMCs such as transient analysis.

Local model checking algorithms have been proposed to mitigate the state space explosion problem using a so called `on-the-fly' approach (see e.g.~\cite{Courcoubetis1992,BCG95,Hol04,GnM11}). 
On-the-fly algorithms are following a {\em top-down} approach that does not require global knowledge of the complete state space. For each state that is encountered, starting from a given state, the outgoing transitions are followed to adjacent states, constructing step by step local knowledge of the state space until it is possible to decide whether the given state satisfies the formula. 
For qualitative model checking, local model-checking algorithms have been shown to have the same worst-case complexity as the best existing global procedures for the above mentioned logics. However, in practice, they have better performance  when only a subset of the system states need to be analysed to determine whether a system satisfies a formula. Furthermore, local model-checking may still provide some results in case of systems with a very large or even infinite state space where global model checking approaches would be impossible to use. In the context of stochastic model checking several on-the-fly approaches have been proposed, among which~\cite{DIMTV04} and~\cite{Ha+09}. The former is a probabilistic model checker for bounded PCTL formulas. The latter uses an on-the-fly approach to detect a maximal relevant search depth in an infinite state space and then uses a {\em global} model-checking approach to verify bounded CSL~\cite{Az+00,Ba+03} formulas in a continuous time setting on the selected subset of states. 
An on-the-fly approach by itself however, does not solve the challenging scalability problems that arise in truly large parallel systems, such as collective adaptive systems, e.g., gossip protocols~\cite{CBR09}, self-organised collective decision making~\cite{MonFerSchPinBirDor11}, computer epidemics~\cite{BGH08} and foreseen smart urban transportation systems and decentralised control strategies for smart grids.

To address this type of scalability challenges in probabilistic model-checking, recently, several approaches have been proposed. In~\cite{He+04,Gu+06} approximate probabilistic model-checking is introduced. This is a form of statistical model-checking that consists in the generation of random executions of an {\em a priori} established maximal length. On each execution the property of interest is checked and statistics are performed over the outcomes. The number of executions required for a reliable result depends on the maximal error-margin of interest. The approach relies on the analysis of individual execution traces rather than a full state space exploration and is therefore memory-efficient. However, the number of execution traces that may be required to reach a desired accuracy may be large and therefore time-consuming. The approach works for general models, i.e., not necessarily populations of similar objects, but is not independent of the number of objects involved.

To analyse properties of large scale mobile communication networks mean field approximations in discrete time have also been used e.g., in Bakshi et al.~\cite{Ba+10}. In that work an automatised method is proposed and applied to the analysis of dynamic gossip networks. A general convergence result to a deterministic difference equation is used, similar to that in~\cite{BMM07}, but not its extension to analyse individual behaviour in the context of a large population, nor its exploitation in model-checking algorithms.

In Chaintreau et al.~\cite{CBR09}, mean field convergence in {\em continuous time} is used to analyse the distribution of the age of information that objects possess when using a mix of gossip and broadcast for information distribution in situations where objects are not homogeneously distributed in space. An overview of mean field interaction models for computer and communication systems by Bena\"{i}m et al. can be found in~\cite{BeB08}.

Preliminary ideas on the exploitation of mean field convergence in {\em continuous time} for model-checking mean field models, and in particular for an extension of the logic CSL, were informally sketched in a presentation at QAPL 2012~\cite{Ko+12}, but no model-checking algorithms were presented. Follow-up work on the above mentioned approach can be found in~\cite{Ko+13} which relies on earlier results on fluid model checking by Bortolussi and Hillston~\cite{BoH12b}. In the latter a {\em global CSL} model-checking procedure is proposed for the verification of properties of a selection of individuals in a population.  This work is perhaps the most closely related to our work, however their procedure exploits mean field convergence and fast 
simulation~\cite{DaN08,GaG10} in a {\em continuous} time setting
rather than in a discrete time setting and is based on an {\em interleaving} model of computation, rather than
a clock-synchronous one; furthermore, a {\em global } model-checking approach,
rather than an on-the-fly approach,  is followed. The modelling language used in~\cite{BoH12b} is PEPA.
%
%
Earlier work by Stefanek et al.~\cite{SHB10} on the use of mean field convergence in {\em continuous time} for grouped PEPA has investigated the quality of the convergence results when the related differential equations are derived directly from the process algebraic model. Potential issues with accuracy were found concerning the parallel composition operator of PEPA that involves a (non-linear) minimum function applied to rates originating from synchronising populations. This could, in some circumstances, give rise to  inaccuracies in the approximation. It is however possible to detect such situations.

\section{Time bounded PCTL and On-the-fly Model-Checking}
\label{PCTLaMC}

In this section we  recall the definition of the {\em time bounded} fragment of PCTL\footnote{For notational simplicity we  call the fragment PCTL as well.} and we present an on-the-fly model-checking algorithm. The algorithm is
parametric in the sense that it  can be used for different languages and semantic interpretations. In this paper we use two instantiations of the algorithm; one is
on a DTMC semantics of a simple language of object populations (Sect.~\ref{ModellingLanguage}) and the other is on a mean-field approximation semantics of the same language, for ``fast model-checking'' (Sect.\ref{MeanFieldAprx}).
For the sake of readability, we present only a schema of the algorithm 
 for time {\em bounded} PCTL, that is the same as that proposed in~\cite{DIMTV04}. 
 The interested reader is referred to~\cite{La+13b} where 
 a novel algorithm is defined and implemented for {\em the full} logic.


\subsection{Time bounded PCTL}
\label{Logic}

Given a  set  $\sls$ of atomic propositions, the syntax of PCTL is defined 
below, where $a\in \sls$, $k\geq 0$ and $\bowtie\, \in \SET{\geq,>,\leq,<}$:
\[
\Phi  ::=  a \mid \neg \, \Phi \mid  \Phi \, \vee \, \Phi \mid \Prob{\bowtie}{p}{\varphi} \qquad
\mbox{ where }
\varphi ::=  \lnxt \Phi \mid  \Phi \rU{\le k} \Phi .
\]
%
PCTL formulae are interpreted over {\em state labelled}  DTMCs.
A state labelled DTMC is a pair $\langle \calM, \slf  \rangle$ where $\calM$ is
a DTMC with state set $\calS$ and $\slf : \calS \rightarrow 2^\sls$ associates each state
with a set of atomic propositions; for each state $s \in \calS$, $\ell(s)$ is the
set of atomic propositions true in $s$. In the following, we assume $\sdtm$ 
be the one step probability matrix for $\calM$; we abbreviate
$\langle \calM, \slf  \rangle$ with $\calM$, when no confusion can arise.
A path $\sigma$ over  $\calM$ is a non-empty sequence of states
$s_0, s_1, \cdots$ where  $\sdtm_{s_i,s_{i+1}} >0$
for all $i\ge 0$. 
We let $\Paths_{\calM}(s)$ denote the set of all infinite paths over $\calM$ starting from state $s$. 
By $\sigma[i]$ we denote the $i$-th  element $s_i$ of path 
$\sigma$. 
Finally, in the sequel we will consider DTMCs equipped with an initial state $s_0$,
i.e. the probability mass is initially all in $s_0$. For any such a DTMC $\calM$, and for all $t \in \nats$ 
we let the set $\at_{\calM}(t) = \{\sigma[t] \mid \sigma \in \Paths_{\calM}(s_0)\}$.

We define the satisfaction relation on $\calM$ and the logic in Table~\ref{D:DPCTLSAT}.
\begin{table}[tbp]
\begin{center}
\fbox{
\begin{minipage}{4.5in}
\footnotesize
$
\begin{array}{lcl}
s \models_{\calM} a & \iff & a \in \ell(s) \\[1ex]
s \models_{\calM} \neg \Phi & \iff & \mbox{not } s \models_{\calM} \Phi \\[1ex]
s \models_{\calM} \Phi_1 \, \vee \, \Phi_2 & \iff & s \models_{\calM} \Phi_1 
\mbox{ or } s \models_{\calM} \Phi_2\\[1ex]
s \models_{\calM} \Prob{\bowtie}{p}{\varphi} & \iff &
   \Pr \{ \sigma \in \Paths_{\calM}(s) \mid \sigma \models_{\calM} \varphi \} \bowtie p\\[0.20cm]
\sigma \models_{\calM} \, \lnxt \, \Phi & \mbox{iff  } & \sigma[1] \models_{\calM} \Phi \\
\sigma  \models_{\calM} \Phi_1 \rU{\le k}{} \Phi_2 & 
\mbox{iff  } & 
\Ex 0 \le h \le k \;s.t.\; \sigma[h] \models_{\calM} \Phi_2 \AND 
\All 0\le i <  h \;.\;  \sigma[i] \models_{\calM} \Phi_1\\
\end{array}
$
\normalsize
\end{minipage}
}
\end{center}
\caption{Satisfaction relation for Time Bounded PCTL.\label{D:DPCTLSAT}}
\end{table}

\subsection{On-the-fly PCTL Model-Checking Algorithm}


In this section we introduce a local on-the-fly model-checking algorithm for time-bounded PCTL formulae. The basic 
idea of an on-the-fly algorithm is simple: while the state space is generated in a stepwise fashion from a term $s$ of the language, the algorithm considers only the relevant prefixes of the paths while they are generated. 
For each of them it updates the information about the satisfaction of the formula 
that is checked. In this way, only that part of the state space is generated that can provide information on the satisfaction 
of the formula and irrelevant parts are not taken into consideration. 

In the case of probabilistic process population languages, for large populations,
a mean-field approximated semantics can be defined. In Sect.~\ref{MeanFieldAprx}
we show how a drastic reduction of the state space can be obtained, by using the same algorithm on such semantic models. We call such a combined use
of on-the-fly model-checking and mean-field semantics ``Fast model-checking''
after ``Fast simulation'', introduced in~\cite{BMM07}.

The algorithm abstracts from any specific language and different semantic interpretations of a language. We only assume an abstract interpreter function that, given a generic process term, returns a probability distribution over the set  of terms. 
Below, we let \textsf{proc} be the (generic) type of \emph{probabilistic process terms}
while we let \textsf{formula} and \textsf{path\_formula} be the types of \emph{state-} and \emph{path-}
PCTL formulae. Finally, we use  \textsf{lab} to denote the type of \emph{atomic propositions}.

The abstract interpreter can be modelled by means of two functions: 
$\nxt$ and $\lab$.
Function $\nxt$ associates a list of pairs $(\mathsf{proc},\mathsf{float})$ to each element of type \textsf{proc}. The list of pairs gives the terms, i.e. states, that can be reached in one step from the given state and their one-step transition probability. We require that for each $s$ of type $\mathsf{proc}$ it  holds that
$0 < p' \leq 1$,  for all $(s',p') \in \nxt(s)$ 
and
$ { \sum_{(s',p')\in \nxt(s)} p'= 1}$.
Function $\lab$ returns for each element of type \textsf{proc} a function associating a \textsf{bool} to each atomic proposition $a$ in $\textsf{lab}$.
Each instantiation of the algorithm consists in the appropriate definition
of $\nxt$ and $\lab$, depending on the language at hand and its semantics.

The local model-checking algorithm is defined as a function, \checkDtmc, 
shown in Table~\ref{alg:check_dtmc}.
On atomic state-formulae, the function returns the value
of $\lab$; when given a non-atomic state-formula, \checkDtmc{}  
calls itself recursively on sub-formulae, in case they are state-formulae, whereas it calls function 
\checkPath, in case the sub-formula is a path-formula.
In both cases the result is a Boolean value that indicates whether the state satisfies the formula.

\begin{table}[tbp]
\begin{lstlisting}
 $\checkDtmc$( $s: \mathsf{proc},$  $\Phi: \mathsf{formula}$)=
 match  $\Phi$
 with
 |$a\rightarrow (\lab~s~a)$
 |$\neg\Phi_1\rightarrow \neg\checkDtmc(s,\Phi_1)$
 |$\Phi_1\vee \Phi_2\rightarrow\checkDtmc(s,\phi_1)\vee \checkDtmc(s,\Phi_2)$ 
 |$\Prob{\relop}{p}{\varphi}\rightarrow \checkPath(s,\varphi)\relop p$
\end{lstlisting}
\caption{\label{alg:check_dtmc} Function $\checkDtmc$}
\end{table}

Function \checkPath, shown in Table~\ref{alg:checkpath_dtmc}, takes a state $s \in \textsf{proc}$ and a PCTL path-formula $\varphi \in \textsf{path\_formula}$ as input. As a result, it produces the probability measure of the set of paths, starting in state $s$, which satisfy path-formula $\varphi$. Following the definition of the formal semantics of PCTL, two 
different cases can be distinguished. If $\varphi$ has the form $\lnxt \Phi$ then the result is  the sum of the probabilities of the transitions from $s$ to those next states $s'$ that satisfy $\Phi$. To verify the latter, function \checkDtmc{} is recursively invoked on such states. If 
$\varphi$ has the form $\Phi_1 \rU{\le k}{} \Phi_2$ then we first check if $s$ satisfies $\Phi_2$, then $1$ is returned, since $\varphi$ is trivially satisfied. If $s$ does not satisfy $\Phi_1$ then $0$ is returned, since  $\varphi$ is trivially violated.  For the remaining case we need to recursively invoke \checkPath{} 
for the states reachable in one step from $s$, i.e. the states in the set 
$\{s'| \exists p': (s',p') \in \nxt(s)\}$. Note that these invocations of \checkPath{} are made on 
$\varphi' = \Phi_1 \rU{\le k-1}{} \Phi_2$ if $k>0$. If $k \leq 0$ 
then the formula is trivially not satisfied by $s$ and the value $0$ is returned.

\begin{table}[tbp]
\begin{lstlisting}
 $\checkPath$( $s: \mathsf{proc}$,  $\varphi: \mathsf{path\_formula}$ )=
 match $\varphi$ with
 |$\lnxt \Phi\rightarrow $ let $p$ = 0.0 and $\mathit{lst}$ = $\nxt(s)$ in
      for $(s',p')\in \mathit{lst}$ do if $\checkDtmc( s' , \Phi )$ then p $\leftarrow p+p'$
      done;
      $p$
 |$ \Phi_1 \rU{\le k}{} \Phi_2\rightarrow$ if $\checkDtmc( s , \Phi_2 )$ then 1.0
        else if $\checkDtmc( s , \neg \Phi_1 )$ then 0.0 
        else if $k>0$ then
               begin
                   let $p$ = 0.0 and $\mathit{lst}$ = $\nxt(s)$ in
                   for $(s',p')\in \mathit{lst}$ do
            	      $p\leftarrow p+p'*\checkPath(s', \Phi_1 \rU{\le k-1}{} \Phi_2)$
                   done;
                   $p$
               end
             else 0.0
\end{lstlisting}
\caption{\label{alg:checkpath_dtmc} Function $\checkPath$}
\end{table}

%
%

Let $s$ be a term of a probabilistic process language and $\calM$ 
the complete discrete time stochastic process associated with $s$ by the formal semantics of the language. 
The following theorem is easily proved by induction on $\Phi$~\cite{La+13b}.
\begin{theorem}
\label{LMC:CORR}
$s \models_{\calM}  \Phi$ if and only if $\checkDtmc(s,\Phi)=\mathsf{true}.\endstat$
\end{theorem}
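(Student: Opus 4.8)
The plan is to prove Theorem~\ref{LMC:CORR} by structural induction on the formula $\Phi$, following the mutual recursion between \checkDtmc{} and \checkPath{}. The induction hypothesis must be strengthened to a simultaneous statement over both functions: for all states $s$ and all state-formulae $\Phi$, $\checkDtmc(s,\Phi) = \mathsf{true}$ iff $s \models_{\calM} \Phi$; and for all states $s$ and all path-formulae $\varphi$, $\checkPath(s,\varphi) = \Pr\{\sigma \in \Paths_{\calM}(s) \mid \sigma \models_{\calM} \varphi\}$. The second conjunct is the crucial auxiliary invariant, since the probabilistic operator case of \checkDtmc{} reduces to it, and it is itself proved by an inner induction on the time bound $k$ for the bounded-until case.

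First I would handle the base case $\Phi = a$: here $\checkDtmc(s,a)$ returns $\lab~s~a$, which by the definition of the abstract interpreter (and its agreement with $\ell$ in the concrete semantics) equals $\mathsf{true}$ exactly when $a \in \ell(s)$, i.e.\ when $s \models_{\calM} a$. The Boolean connective cases $\neg\Phi_1$ and $\Phi_1 \vee \Phi_2$ are immediate from the induction hypothesis and the definition of $\models_{\calM}$ in Table~\ref{D:DPCTLSAT}, since \checkDtmc{} simply propagates $\neg$ and $\vee$ homomorphically. For $\Phi = \Prob{\relop}{p}{\varphi}$, the code returns $\checkPath(s,\varphi) \relop p$; by the path-formula invariant this is $\Pr\{\sigma \mid \sigma \models_{\calM} \varphi\} \relop p$, which by the semantics of $\Prob{\relop}{p}{\varphi}$ is exactly the condition for $s \models_{\calM} \Prob{\relop}{p}{\varphi}$.

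The path-formula invariant is where the real work lies. For $\varphi = \lnxt\Phi$, \checkPath{} sums $p'$ over all $(s',p') \in \nxt(s)$ with $\checkDtmc(s',\Phi) = \mathsf{true}$; by the induction hypothesis on $\Phi$ this is $\sum_{s' : s' \models_{\calM} \Phi} \sdtm_{s,s'}$, which is precisely the measure of $\{\sigma \mid \sigma[1] \models_{\calM}\Phi\}$ by the standard cylinder-set construction and the requirement $\sum_{(s',p')\in\nxt(s)} p' = 1$. For $\varphi = \Phi_1 \rU{\le k}{} \Phi_2$ I would argue by induction on $k$. If $s \models_{\calM} \Phi_2$ the witness $h=0$ makes every path satisfy $\varphi$, so the measure is $1$, matching the returned $1.0$; if $s \not\models_{\calM}\Phi_1$ and $s\not\models_{\calM}\Phi_2$ then no path can satisfy $\varphi$ (the first state already blocks both the witness and the ``holds on the way'' condition), so the measure is $0$. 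Otherwise $s\models_{\calM}\Phi_1$, $s\not\models_{\calM}\Phi_2$: if $k=0$ the measure is $0$ (no admissible $h$ in $\{0,\dots,0\}$ other than $0$, already excluded), matching \checkPath{} returning $0.0$; if $k>0$, a path $\sigma = s,\sigma'$ satisfies $\Phi_1\rU{\le k}{}\Phi_2$ iff its suffix $\sigma'$ satisfies $\Phi_1\rU{\le k-1}{}\Phi_2$, so by the law of total probability over the first transition the measure equals $\sum_{(s',p')\in\nxt(s)} p' \cdot \Pr\{\sigma' \in \Paths_{\calM}(s') \mid \sigma' \models_{\calM} \Phi_1 \rU{\le k-1}{}\Phi_2\}$, and by the inner induction hypothesis each factor equals $\checkPath(s',\Phi_1\rU{\le k-1}{}\Phi_2)$, which is exactly what the \texttt{for} loop accumulates.

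The main obstacle, and the only step requiring genuine care rather than routine case-chasing, is the measure-theoretic justification in the bounded-until case: one must verify that the decomposition of $\{\sigma \in \Paths_{\calM}(s) \mid \sigma \models \Phi_1\rU{\le k}{}\Phi_2\}$ into the disjoint union, over $s'$ reachable in one step, of cylinders indexed by the satisfying continuations from $s'$ is measurable and that its probability is the claimed weighted sum --- i.e.\ that the Markov property and countable additivity on the cylinder $\sigma$-algebra deliver the recursion the algorithm implements. Since $\calM$ has finitely many states per object and the time bound is finite, all sets involved are finite unions of cylinders, so this is standard; I would simply cite the classical construction (as in~\cite{Hansson1994}) rather than redeveloping it. Everything else follows mechanically from the definitions of \checkDtmc{}, \checkPath{}, and $\models_{\calM}$, and the correctness is noted to be routine in~\cite{La+13b}.
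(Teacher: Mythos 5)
Your proposal is correct and follows exactly the route the paper itself indicates: the paper states only that Theorem~\ref{LMC:CORR} ``is easily proved by induction on $\Phi$'' (deferring details to~\cite{La+13b}), and your strengthened simultaneous invariant for \checkDtmc{} and \checkPath{}, with the inner induction on $k$ for bounded until, is the standard way to carry out that induction. No discrepancy with the paper's approach.
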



\section{Modelling language}
\label{ModellingLanguage}

In this section we define a simple population description language. The language
is essentially a textual version of the graphical notation used in~\cite{BMM07}. A {\em system}
is defined as a population of $N$ identical interacting processes or objects\footnote{In~\cite{BMM07} {\em object}
is used instead of {\em process}. We consider the two terms synonyms here.}. At any
point in time, each object can be in any of its finitely many states and the evolution of the system
proceeds in a {\em clock-synchronous} fashion: at each clock tick each member 
of the population must either execute one of the transitions that are enabled in
its current state, or remain in such a state.\footnote{For the purpose of the present paper, language expressivity is not  a main concern.}

\paragraph{Syntax.}
Let $\act{}$ be a denumerable non-empty set of {\em actions}, ranged over by
$a, a', a_1, \ldots$ and $\sc{}$ be a denumerable non-empty set of {\em state constants}, ranged over by $C,C', C_1, \ldots$
An {\em object specification} $\os$ is a set $\SET{D_i}_{i\in I}$, for finite index set $I$, where each {\em state definition} $D_i$ has the form
$
C_i \dfas \sum_{j\in J_i} a_{ij}.C_{ij}
$,
with $J_i$ a finite index set, states $C_i,C_{ij} \in \sc{}$, and $a_{ij} \in \act{}$,
for $i\in I$ and $j\in J_i$. Intuitively, the notation $\sum_{j\in J_i} a_{ij}.C_{ij}$
is to be intended as the n-ary extension of the standard process algebraic
binary non-deterministic choice operator. We require that $a_{ij} \not= a_{ij'}$, for $j\not= j'$ and that for each 
state constant $C_{ij}$ occurring in the r.h.s. of a state definition $D_i$ of $\os$ 
there is a  {\em unique} $k \in I$ such that $C_{ij}$ is the l.h.s. of $D_k$. 

\begin{example}[An epidemic model~\cite{Bo+13}]
\label{ex:epidemicmodel}
We consider a network of computers that can be infected by a worm. Each node in the network can acquire infection 
from two sources, i.e. by the activity of a worm of an infected node 
({\tt inf\_sus})
or by an external source ({\tt inf\_ext}). Once a computer is infected, the worm 
remains latent for a while, and then activates ({\tt activate}). When the worm is active, it tries to propagate over the network by sending messages to other nodes. 
After some time, an infected computer can be patched ({\tt patch}), so that the infection is recovered. New versions of the worm can appear; for this reason, recovered 
computers can become susceptible to infection again, after a while ({\tt loss}). 
%
The object specification of the epidemic model is the following:\\
%
%

{\tt
S := inf\_ext.E + inf\_sus.E

E := activate.I

I := patch.R 

R := loss.S
}

\end{example}

The set of all actions occurring in object specification $\os$ is denoted by $\act{\os}$. Similarly, the set of states is denoted by $\sc{\os}$, ranged over by $c,c', c_1 \cdots$. 
In  Example~\ref{ex:epidemicmodel}, 
we have $\act{\mathtt{EM}} = \SET{\mathtt{inf\_ext}, \mathtt{inf\_sus}, \mathtt{activate}, \mathtt{patch}, \mathtt{loss} }$ and $\sc{\mathtt{EM}} = \SET{\mathtt{S},\mathtt{E}, \mathtt{I},\mathtt{R}}$.
A system is assumed composed of $N$ interacting instances of an object. 
Interaction among objects
is modelled probabilistically, as described below. Each action in $\act{\os}$
is assigned a probability value, that may depend on the global state of the system. This
is achieved by means of a {\em probability function definition}, that takes the
following form: $a :: E$, where $a \in \act{\os}$ and $E$ is an expression,
i.e. an element of $\mbox{Exp}$, defined according 
to the following grammar:
\[
E ::= v \, \sep \, \frc C \, \sep \, \uop \, E \, \sep \, E \, \bop \, E \, \sep \,(E)
\]
where  $v \in [0,1]$ and for each state $C$, $\frc C$ denotes the fraction of objects,
over the total number of objects $N$, in the system, that are currently in state $C$. Operators
$ \uop$ and $\bop$ are standard arithmetic unary and binary operators.

\begin{example}[Probability function definitions]
\label{ex:transition}
For the \emph{epidemic model} of Example~\ref{ex:epidemicmodel} we assign the following probability function definitions:\\[1em]
%
%

{\tt
inf\_ext :: $\alpha_{e}$;

inf\_sus :: $\alpha_{i} * (\frc I)$;

activate :: $\alpha_{a}$;

patch :: $\alpha_{r}$;

loss :: $\alpha_{s}$;
}\\

\noindent
where $\alpha_{e}$, $\alpha_{i}$, $\alpha_{a}$, $\alpha_{r}$ and $\alpha_{s}$ are model 
parameters in $[0,1]$, with $\alpha_{e} + \alpha_{i}  \leq 1$.
%
\end{example}

%
%
%
%

\noindent
A {\em system specification} is a triple 
$\langle \os, A, \vct{C_0} \rangle\aN$
where $\os$ is an object specification, $A$ is a set of probability function definitions
containing exactly one definition for each $a \in \act{\os}$, and 
$\vct{C_0}=\langle c_{0_1},\ldots,c_{0_N}  \rangle$ is the   {\em initial system state}, with 
$c_{0_n} \in \sc{\os}$, for $n=1\ldots N$; we say that $N$ is the  {\em population size}\footnote{Appropriate syntactical shorthands can be introduced for describing the initial state,
e.g. $\langle {\tt S[2000],E[100],I[200],R[0]} \rangle$ for 2000 objects initially in state 
{\tt S} etc.}; in the sequel, we will omit the explicit indication of the size $N$ in $\langle \os, A, \vct{C_0} \rangle\aN$,
and elements thereof or related functions,
writing simply $\langle \os, A, \vct{C_0} \rangle$, when this cannot cause confusion. 
\paragraph{Semantics.}
Let $\langle \os, A, \vct{C_0} \rangle$ be a system specification.
We associate with $\os$ the Labelled Transition System (LTS) 
$\langle \sc{\os}, \act{\os}, \trans{} \rangle$, where $\sc{\os}$ and $\act{\os}$
are the states and labels of the LTS, respectively, and the transition
relation $\trans{} \subseteq \sc{\os} \times  \act{\os} \times \sc{\os}$ 
is the smallest relation induced by rule 
(\ref{ObjLTS}).

\begin{equation}
\label{ObjLTS}
\sosrule
{C \dfas \sum_{j \in J} a_j.C_j \quad k \in J}
{C \trans{a_k} C_k}
\end{equation}

In the following we let 
$
\us{S} =\SET{\vct{m}\in [0,1]^S | \sum_{i=1}^{S} \vct{m}_{[i]} =1}
$
be the unit simplex of dimension $S$; furthermore, 
we let $c, c', C, C'  \ldots$ range over $\sc{\os}$ and for generic
vector $\vct{v} = \langle  v_1, \ldots, v_r \rangle$ we let
$\vct{v}_{[j]}$ denote the $j$-th component $v_j$ of $\vct{v}$, for $j=1,\ldots,r$.
A (system) {\em global state} is a tuple
$\vct{C}\aN \in  \sc{\os}^N$.  
W.l.g., we  assume that 
$\sc{\os} = \SET{C_1,\ldots,C_S}$
and that a total order is defined on state constants $C_1,\ldots,C_S$
so that we can unambiguously associate each component 
of a vector $\vct{m} = \langle  m_1, \ldots, m_S \rangle \in \us{S}$ with
a distinct element of $\SET{C_1,\ldots,C_S}$.
With each global state $\vct{C}\aN$ 
an {\em occupancy measure} vector
$\vct{M}\aN (\vct{C}\aN) \in \us{S}$ is associated where
$\vct{M}\aN (\vct{C}\aN) = \langle M\aN_1,\ldots,M\aN_S  \rangle$ 
with
$$
M\aN_i = \frac{1}{N}\sum_{n=1}^{N} \mathbf{1}_{\SET{\vct{C}\aN_{[n]} = C_i}}
$$
for $i=1,\ldots,S$, and the value of 
$\mathbf{1}_{\SET{\alpha = \beta}}$ is $1$, if 
$\alpha = \beta$, and $0$ otherwise.


A probability function definition $a:: E$ associates a real value
to action $a$ by evaluating $E$ in the current global state, via the interpretation function $\expsem$. In practice 
the occupancy measure representation of the state is used in $\expsem$.

 The expressions
interpretation function $\expsem : \mbox{Exp} \rightarrow \us{S} \rightarrow \reals$ is defined as usual: \\
%
%

$\semint{\expsem}{v}_{\vct{m}}  =  v$

$\semint{\expsem}{\frc C_i}_{\vct{m}} = \vct{m}_{[i]}$

$\semint{\expsem}{\uop \, E}_{\vct{m}}  =  \uop \, (\semint{\expsem}{E}_{\vct{m}})$

$\semint{\expsem}{E_1 \, \bop \, E_2}_{\vct{m}}  = (\semint{\expsem}{E_1}_{\vct{m}}) \, \bop \, (\semint{\expsem}{E_2}_{\vct{m}})$

$\semint{\expsem}{(E)}_{\vct{m}}  = (\semint{\expsem}{(E)}_{\vct{m}})$\\

\noindent
The set $A$ of probability function definitions characterises a function 
$\pi$ with type $\us{S}  \rightarrow  \act{\os} \rightarrow \reals$ as follows:
for each $a::E$ in $A$, we have
$\pi(\vct{m}, a) = \semint{\expsem}{E}_{\vct{m}}$.

For a system specification of size $N$, we define the {\em object transition matrix} as follows: 
$\otm\aN: \us{S} \times \sc{\os} \times \sc{\os}  \rightarrow \reals$, with
\begin{equation}
\label{OTM}
\otm\aN(\vct{m})_{c,c'} =
\left\{
\begin{array}{l}
\sum_{a: c \trans{a} c' } \pi(\vct{m}, a), \mbox{ if } c\not = c',\\[.20cm]
1 - \sum_{a \in I(c)} \pi(\vct{m}, a), \mbox{ if } c = c'.
\end{array}
\right.
\end{equation}
where $I(c) = \SET{a \in \act{\os}| \exists c' \in  \sc{\os}: c\trans{a}c' \not=c}$. 
We say that a  state $c \in \sc{\os}$ is {\em probabilistic} in $\vct{m}$ 
if 
$
0 \leq \sum_{a \in {I^*(c)}} \pi(\vct{m}, a) \leq 1
$
where set $I^*(c)$ is defined as follows: $I^*(c) = I(c) \cup \SET{a \in \act{\os} | c\trans{a}c}$.
Note that whenever all states in $\sc{\os}$ are probabilistic in $\vct{m}$, matrix
$\otm\aN(\vct{m})$ is a one step transition probability matrix.
We define the (system) {\em  global state  transition matrix}  
$\stm\aN: \us{S} \times \sc{\os}^N \times \sc{\os}^N  \rightarrow \reals$, as 
$$
\stm\aN(\vct{m})_{\vct{C},\vct{C}'} = \Pi_{n=1}^{N} \otm\aN(\vct{m})_{\vct{C}_{[n]},\vct{C}_{[n]}'}.
$$
Note that whenever all states in $\sc{\os}$ are probabilistic in $\vct{m}$, matrix
$\stm^N(\vct{m})$ is a one step transition probability matrix modelling
a possible single step of the system as result of the parallel
execution of a single step of each of the $N$ instances of the object.
In this case, the $S^N \times S^N$ matrix $\sdtm\aN$
with 
\begin{equation}
\label{SyDTMC}
\sdtm\aN_{\vct{C},\vct{C}'} =
\stm\aN(\vct{M}\aN (\vct{C}))_{\vct{C},\vct{C}'}
\end{equation}
is  the one-step transition matrix of a (finite state) DTMC, namely the DTMC of the
system composed on $N$ objects specified by $\os$.  In this case, 
we let $\vct{X}\aN(t)$ denote
the Markov process with transition probability matrix $\sdtm\aN$ as above
and $\vct{X}\aN(0)= \vct{C_0\aN}$, i.e. with initial probability distribution  
$\dirac_{\vct{C_0\aN}}$, where $\vct{C_0\aN}$
is the initial system state and $\dirac_{\vct{C_0\aN}}$ is the Dirac distribution
with the total mass on $\vct{C_0\aN}$. With a little bit of notational overloading, 
we define the `occupancy measure DTMC' as 
 $\vct{M}\aN (t) = \vct{M}\aN (\vct{X}\aN(t))$; for $\vct{m} = \vct{M}\aN(\vct{C})$, for some state $\vct{C}$ of DTMC $\vct{X}(t)$, we have: 
\begin{equation}
\label{PRMX:OCC:MEA}
\Pr\SET{\vct{M}\aN (t+1) = \vct{m}' \mid \vct{M}\aN (t) = \vct{m}}
= \sum_{\vct{C}': \vct{M}\aN(\vct{C}')=\vct{m}'} \sdtm\aN_{\vct{C},\vct{C}'}
\end{equation}
Note that the above definition is a good definition; in fact, 
if $\vct{M}\aN(\vct{C}) = \vct{M}\aN(\vct{C}'')$, then 
$\vct{C}$ and $\vct{C}''$ are just two permutations of the same
local states. This implies that for all $\vct{C}'$  we have
$\sdtm\aN_{\vct{C},\vct{C}'} = \sdtm\aN_{\vct{C}'',\vct{C}'}$.

\paragraph{PCTL local Model-checking.}
For the purpose of expressing system properties in PCTL, 
we partition the set of atomic propositions $\sls$ into sets $\sls_1$
and $\sls_g$. 
Given system specification $\langle \os, A, \vct{C_0\aN} \rangle\aN$, we extend it 
with a {\em state labelling function definition} that associates each state 
$c\in \sc{\os}$ with a (possibly empty) finite set $\slf_1(c)$ of propositions from  $\sls_1$.
We extend $\slf_1$ to global states with $\slf_1(\langle c_1, \ldots, c_N \rangle) = \slf_1(c_1)$; this way, we can express {\em local} properties of the first object in the system, in the {\em context} of the complete 
population\footnote{Of course, the choice of the {\em first} object is purely conventional. Furthermore,
all the results which in the present paper are stated w.r.t. the {\em first} object of a system, are easily extened to
finite subsets of objects in the system. For the sake of notation, in the rest of the paper, we stick 
to the {\em first object} convention.}. 
In order to express also (a limited class of) {\em global} properties of the population, we use set $\sls_g$.
The  system specification is further enriched by associating labels $a \in \sls_g$  with  expressions $\mbox{bexp}$
in the class $\mbox{BExp}$ of restricted boolean expressions. We assume a sublanguage of function specifications
be given\footnote{The specific features of the sublanguage are not relevant for the purposes of the present paper and we leave their treatment out for the sake of simplicity.} and for function symbol $F$, 
$\semint{\expsem}{F}_{\vct{m}}: [0,1]^q \mapsto \reals$  continuous in $[0,1]^q$,
with $\semint{\expsem}{F}_{\vct{m}} = \semint{\expsem}{F}_{\vct{m}'}$ 
for all $\vct{m},\vct{m}' \in \us{S}$; then $\mbox{BExp}$ is the set of expressions
of the form
$F(E_1,\ldots,E_q) \, \relop \, r$, where  
each $E_j$ is of the form $\frc C$, $\relop \in \SET{>,<}$, $r \in \reals$ and
$
\semint{\expsem}{F(E_1,\ldots,E_q)}_{\vct{m}} =
\semint{\expsem}{F}_{\vct{m}}(\semint{\expsem}{E_1}_{\vct{m}},\ldots,\semint{\expsem}{E_q}_{\vct{m}})
$.

We define the
state global labelling function $\slf_g$ as 
$$
\slf_g(\langle c_1, \ldots, c_N \rangle)=
\SET{a  \in \sls_g \mid 
\semint{\expsem}{\mbox{bexp}_a}_{\vct{M}\aN(\vct{\langle c_1, \ldots, c_N \rangle})}  =  \true}.
$$
We obtain the state labelled DTMC $\calD\aN(t)$ from
 $\vct{X}\aN(t)$, with
transition matrix $\sdtm\aN$ above, by enriching it
with labelling function $\slf_{\calD\aN}$ such that 
$\slf_{\calD\aN}(\vct{C}) = \slf_1(\vct{C}) \cup \slf_g(\vct{C})$.

The definition of $\Paths_{\calD\aN}(\vct{C}\aN)$ as well as that of the satisfaction
relation $\models_{\calD\aN}$ are obtained by instantiating those given in Sect.~\ref{Logic} to  $\calD\aN$.
For $\sigma \in  \Paths_{\calD\aN}(\vct{C}\aN)$, $\sigma[j]_{[n]}$ denotes
the $n$-th local state of global state $\sigma[j]$. \\
For model-checking a system specification  $\langle \os, A, \vct{C_0\aN} \rangle\aN$ 
we instantiate \textsf{proc} with\footnote{Strictly speaking, the relevant
components of the algorithm are instantiated to {\em representations} of the terms, sets
and functions mentioned in this section. For the sake of notational simplicity, we often use the same notation both for mathematical objects and for their representations.} 
$\sc{\os}^N$
and \textsf{lab} with  $\sls_1 \cup \sls_g$.
Function $\nxt$ is instantiated to the function $\nxt_{\calD\aN}$, where
$$
\nxt_{\calD\aN}(\vct{C}) = 
[(\vct{C}',p') \mid \sdtm\aN_{\vct{C},\vct{C}'} = p' > 0 ].
$$
Given a vector $\vct{C}$, $\nxt_{\calD\aN}(\vct{C})$
computes a list corresponding to the positive elements of the row of matrix $\sdtm\aN$ associated with $\vct{C}$. Of course, only those elements of $\sdtm\aN$ 
that are necessary for $\nxt_{\calD\aN}$ are actually computed. 
Function 
$\lab$ is instantiated with the function $\lab_{\calD\aN}: 
\sc{\os}^N \times \act{\os} \rightarrow \mathbb{B}$ with
$\lab_{\calD\aN}(\vct{C},a)= a \in \slf_{\calD\aN}(\vct{C})$.

\begin{example}[Properties]
\label{ex:properties} 
For the epidemic model of Example~\ref{ex:epidemicmodel} we can consider the following properties, where 
$i, e, r \in \sls_1$ are labelling states $I$, $E$ and $R$, respectively, and 
$\mathit{LowInf} \in \sls_g$ is defined as $(\frc I )< 0.25$:
\begin{enumerate} 
\item[P1] 
the worm will be active in the first component within  $k$ steps with a probability that is at most $p$:
$
\Prob{\leq}{p}{~true~\rU{\le k}~i~}
$;
\item[P2] the probability that the first component is infected, but latent, in the next $k$ steps while the worm is active on less then $25\%$ of the components is
at most $p$:
$
\Prob{\leq}{p}{\mathit{LowInf} \rU{\le k}~e~}
$;
\item[P3] \label{P3} the probability to reach, within $k$ steps, a configuration where the first component is not infected but the worm will be
activated with probability greater than $0.3$ within $5$ steps
is at most $p$:\\[0.5em]
$\mbox{ }$\hspace{0.7in}
$
\Prob{\leq}{p}{~true~\rU{\le k} ( !e \wedge !i\wedge \Prob{>}{0.3}{~true~\rU{\le 5}~i~} )}.
$
\end{enumerate}
\end{example}
In Fig.~\ref{tab:completemc} the result of exact PCTL model-checking of Ex.~\ref{ex:epidemicmodel} is reported.
On the left the probability of the set of paths that satisfy the path-formulae used in the three formulae above is shown for a system composed of eight objects each in initial state $S$, for $k$ from 0 to 70.
%
On the right the time needed to perform the analysis using PRISM~\cite{KNP04} and using exact on-the-fly PCTL model checking are presented\footnote{We use a $1.86 GHz$ Intel Core 2 Duo with 4 GB. State space generation time of PRISM is not counted. The experiments are available at \url{http://rap.dsi.unifi.it/$\sim$loreti/OFPMC/}).}, showing that the latter has comparable performance. Worst-case complexity of both algorithms are also comparable.
%
\begin{figure}[tbp]
\begin{tabular}{cc}
\begin{minipage}{7cm}
\includegraphics[scale=0.25]{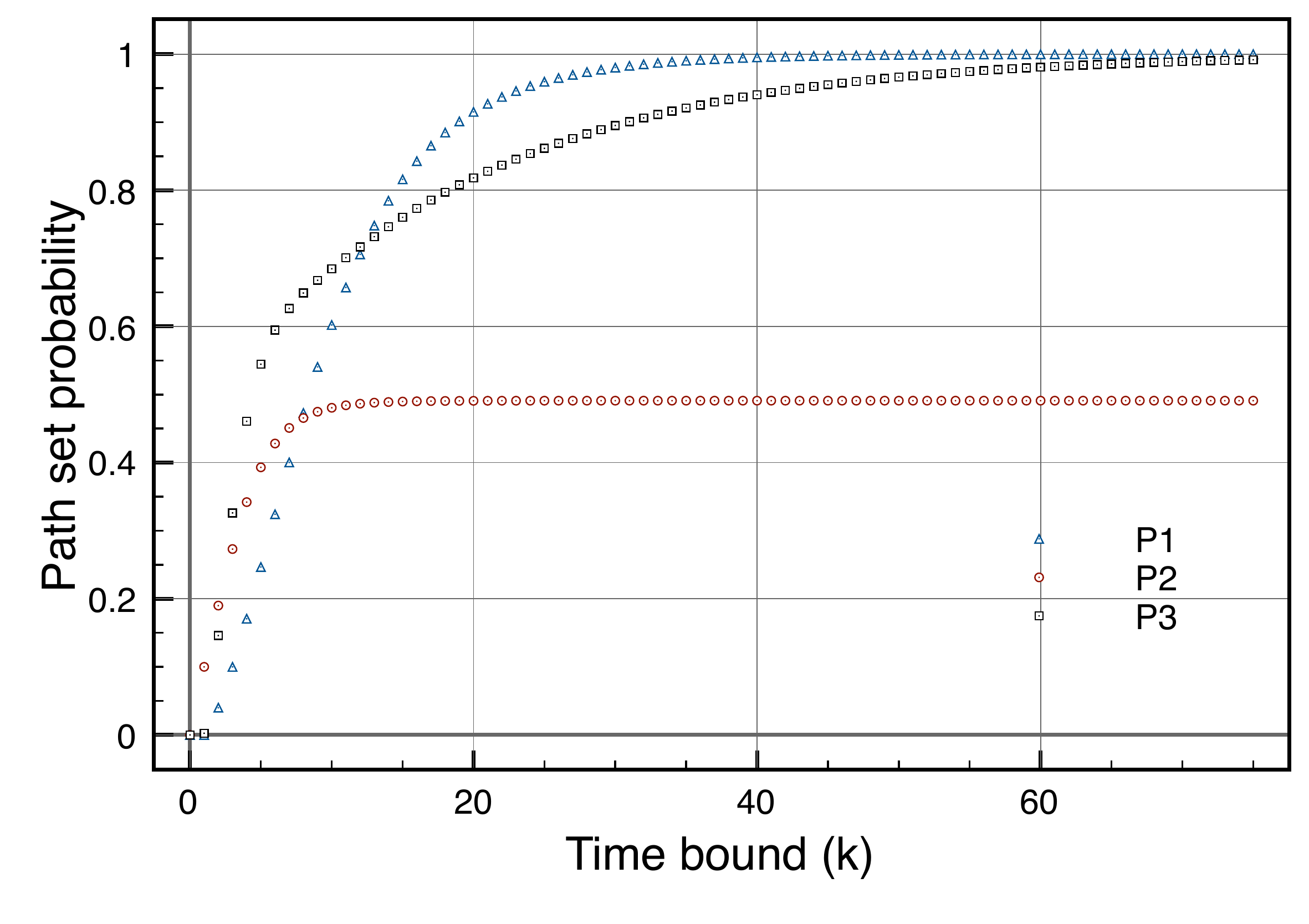} 
\end{minipage} &

\begin{tabular}{c|c|c}
         & \quad PRISM \quad & \quad Exact on-the-fly \\
\hline     
$P1$         &  $108.479s$ & $29.587s$ \\
$P2$         &  $51.816s$ & $3.409s$ \\
$P3$         &  $216.952s$ & $85.579s$\\
\multicolumn{3}{l}{}\\
\multicolumn{3}{l}{Model parameter values:}\\
\multicolumn{3}{l}{$\alpha_e=0.1$, $\alpha_i=0.2$, $\alpha_r=0.2$}\\
\multicolumn{3}{l}{$\alpha_a=0.4$, $\alpha_s=0.1$}
\end{tabular} 
\end{tabular}

\caption{Exact model-checking results (left) and verification time (right).}
\label{tab:completemc}
\end{figure}
The local model-checker has been instantiated with the model 
defined by the (exact) operational semantics of the language, where each state 
$\vct{C} \in \sc{\os}^N$ is a {\em global} system state. In Sect.~\ref{MeanFieldAprx}
we instantiate the procedure with the mean-field, approximated, semantics of the language,
leading to a scalable, `fast', model-checker, insensitive to the population size. 


\section{Fast Mean-field Model-checking}
\label{MeanFieldAprx}

Given a  system specification $\langle \os, A, \vct{C_0\aN} \rangle\aN$ 
with 
initial state $\vct{C_0}$,
we want to focus on the behaviour of the first object, starting in the initial 
state $\vct{C_0}_{[1]}$, when in execution with all the other
objects for (very) large population size $N$. We define a mapping
$\map\aN: \sc{\os}^N \rightarrow (\sc{\os} \times \us{S})$ such that
$\map\aN(\vct{C}\aN) = \langle \vct{C}\aN_{[1]},\vct{M}\aN (\vct{C}\aN)\rangle$.
Note that $\map\aN$ and $\calD\aN(t)$ together define a state labelled DTMC, denoted $\map\calD\aN(t)$, and defined as $\map\aN(\vct{X}\aN(t))$,
with 
$
\slf_1(\langle c ,\vct{m} \rangle) = \slf_1(c), \, 
\slf_g(\langle c ,\vct{m} \rangle)=
\SET{a  \in \sls_g \mid 
\semint{\expsem}{\mbox{bexp}_a}_{\vct{m}}  =  \true}
$, and
$\slf_{\map\calD\aN}(\langle c ,\vct{m} \rangle)$ defined as 
$\slf_1(\langle c ,\vct{m} \rangle) \,\cup \, \slf_g(\langle c ,\vct{m} \rangle)
$, where $\sls_1, \sls_g$ and $\mbox{bexp}_a$ are defined
in a similar way as in Sect.~\ref{ModellingLanguage}. 
The one-step matrix 
of $\map\calD\aN(t)$ is:
\begin{equation}
\label{PRMX:H}
\hdtm\aN_{\langle c, \vct{m}\rangle, \langle c', \vct{m}' \rangle} =
\sum_{\vct{C}': \map\aN(\vct{C}') =  \langle c', \vct{m}' \rangle} 
\sdtm\aN_{\vct{C},\vct{C}'}
\end{equation}
where $\vct{C}$ is such that 
$\map\aN(\vct{C}) = \langle c, \vct{m}\rangle$.\footnote{With a similar 
argument as for definition (\ref{PRMX:OCC:MEA}), noting  that 
$\vct{M}\aN(\vct{C}) = \vct{M}\aN(\vct{C}'')$ and $\vct{C}_{[1]} = \vct{C}''_{[1]}$,
it can be easily seen that also definition (\ref{PRMX:H}) is a good definition.}
The definitions of paths  for state $\langle c,\vct{m}\rangle$ of $\map\calD\aN$,  
$\Paths_{\map\calD\aN}(\langle c,\vct{m}\rangle)$, 
of $\at_{\map\calD\aN}(t)$ and of the
satisfaction relation $\models_{\map\calD\aN}$
of PCTL formulas against  $\map\calD\aN(t)$, are obtained by instantiating the relevant definitions of Sect.~\ref{Logic}
to the model $\map\calD\aN(t)$.
Furthermore, we let 
$
\at_{\map\calD\aN}(t,c) = 
\SET{\langle c',\vct{m}'\rangle \in \at_{\map\calD\aN}(t) \mid c' = c}
$.

We extend mapping $\map\aN$ to 
sets and
paths in the obvious way:  for set $X$ of states, let
$\map\aN(X) = \{\map\aN(x) \mid x \in X\}$, and 
for 
$\sigma \in \Paths_{\calD\aN}(\vct{C\aN})$, let
$\map\aN(\sigma) = \map\aN(\sigma[0])\map\aN(\sigma[1])\map\aN(\sigma[2])\cdots$

\ifTR%
Note that, by definition of $\map\aN$ and $\map\calD\aN(t)$, if 
$\sigma \in \Paths_{\calD\aN}(\vct{C\aN})$, then  
$\map\aN(\sigma) \in \Paths_{\map\calD\aN}(\map\aN(\vct{C\aN}))$ and 
if $\rho \in \Paths_{\map\calD\aN}(\langle c, \vct{m} \rangle)$, then there exist
$\vct{C\aN}$, with  $\map\aN(\vct{C\aN}) = \langle c, \vct{m} \rangle$, and 
$\sigma \in \Paths_{\calD\aN}(\vct{C\aN})$ s.t.
$\map\aN(\sigma) = \rho$.
Furthermore, it is easy to see that
$
\map\aN(\at_{\calD\aN}(t)) = \at_{\map\calD\aN}(t)
$
and $\vct{C} \in \at_{\calD\aN}(t)$ iff $\map\aN(\vct{C}) \in \at_{\map\calD\aN}(t)$. 
\fi%
\ifTR
The following lemma relates the two interpretations of the logic.
\else
The following lemma relates the two interpretations of the logic, and can be easily 
proved by induction on formulae $\Phi$~\cite{La+13a}.
\fi

\begin{lemma}
\label{CorrAbs}
For all $N > 0$, states $\vct{C}\aN$ and formulas $\Phi$ the following holds:
$
\vct{C}\aN \models_{\calD\aN} \Phi \mbox{ iff }
\map\aN(\vct{C}\aN) \models_{\map\calD\aN} \Phi.
\endstat
$
\end{lemma}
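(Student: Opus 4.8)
The plan is to prove the equivalence by structural induction on the formula $\Phi$, following exactly the pattern already used for Theorem~\ref{LMC:CORR}. The essential ingredient is that the mapping $\map\aN$ is a \emph{sound and complete abstraction} with respect to both the labelling and the transition structure: labels are preserved (by the very definitions of $\slf_1$, $\slf_g$ on $\map\calD\aN$, which are obtained by composing the labelling of $\calD\aN$ with $\map\aN$), and path probabilities are preserved (by equation~(\ref{PRMX:H}), which defines $\hdtm\aN$ precisely as the pushforward of $\sdtm\aN$ along $\map\aN$). So the proof reduces to threading these two facts through the inductive cases.

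First I would set up the induction. Base case: for atomic $a \in \sls_1 \cup \sls_g$, the claim $\vct{C}\aN \models_{\calD\aN} a \iff \map\aN(\vct{C}\aN) \models_{\map\calD\aN} a$ is immediate from $\slf_{\map\calD\aN}(\map\aN(\vct{C}\aN)) = \slf_1(\vct{C}\aN_{[1]}) \cup \slf_g$-part, which by construction equals $\slf_{\calD\aN}(\vct{C}\aN)$ (recall $\slf_1$ on global states only looks at the first component, and $\slf_g$ only depends on the occupancy measure $\vct{M}\aN(\vct{C}\aN)$ — both of which are exactly the data retained by $\map\aN$). Boolean cases ($\neg\Phi$, $\Phi_1 \vee \Phi_2$) follow trivially from the induction hypothesis. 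The interesting case is $\Prob{\relop}{p}{\varphi}$, which by the semantics in Table~\ref{D:DPCTLSAT} requires showing
\[
\Pr\SET{\sigma \in \Paths_{\calD\aN}(\vct{C}\aN) \mid \sigma \models_{\calD\aN} \varphi}
=
\Pr\SET{\rho \in \Paths_{\map\calD\aN}(\map\aN(\vct{C}\aN)) \mid \rho \models_{\map\calD\aN} \varphi}.
\]
For this I would use the path-lifting facts recorded just before the lemma (in the \texttt{ifTR} block): $\map\aN$ sends paths of $\calD\aN$ from $\vct{C}\aN$ surjectively onto paths of $\map\calD\aN$ from $\map\aN(\vct{C}\aN)$, and — crucially — it is \emph{measure-preserving}, since the cylinder-set probability of any finite prefix $\vct{C}_0\cdots\vct{C}_n$ equals $\prod_i \sdtm\aN_{\vct{C}_i,\vct{C}_{i+1}}$, which by~(\ref{PRMX:H}) sums, over all prefixes mapping to a fixed $\langle c_0,\vct{m}_0\rangle\cdots\langle c_n,\vct{m}_n\rangle$, to $\prod_i \hdtm\aN_{\langle c_i,\vct{m}_i\rangle,\langle c_{i+1},\vct{m}_{i+1}\rangle}$. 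Then a sub-induction on the structure of $\varphi$ handles $\lnxt\Phi$ (one-step, using the IH on $\Phi$ and label/transition preservation) and $\Phi_1\rU{\le k}{}\Phi_2$ (an induction on $k$, again invoking the IH on $\Phi_1,\Phi_2$); in both cases the satisfaction of $\varphi$ by $\sigma$ depends only on finitely many states along $\sigma$ and their labels, so it is determined by $\map\aN(\sigma)$, and the measure equality transfers.

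The main obstacle — though it is more bookkeeping than genuine difficulty — is making the measure-preservation argument fully rigorous at the level of the probability measures on $\Paths$: one must check that $\map\aN$ induces a well-defined measurable map on path spaces and that the preimage of a cylinder set is a disjoint union of cylinder sets whose probabilities add up correctly, which is exactly what~(\ref{PRMX:H}) (together with the good-definedness footnote attached to it) guarantees. Once that is in place, and since $\varphi$-satisfaction is a cylinder event (finite horizon: $\lnxt$ looks one step ahead, $\rU{\le k}{}$ at most $k$ steps), the two path-sets above are related by $\map\aN$ as preimage/image and therefore have equal probability, closing the $\Prob{\relop}{p}{\varphi}$ case and hence the induction.
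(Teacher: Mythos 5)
Your proposal is correct and follows essentially the same route as the paper: structural induction on $\Phi$, with the $\Prob{\relop}{p}{\varphi}$ case settled by a sub-induction on $\varphi$ (and on $k$ for bounded until) that uses equation~(\ref{PRMX:H}) to rewrite sums of $\hdtm\aN$-entries as sums of $\sdtm\aN$-entries and then applies the induction hypothesis on the state subformulae. The paper dispenses with your general cylinder-set measure-preservation argument and works directly with the recursive characterisations of the path probabilities, which is exactly the bookkeeping-free version of what you describe; the good-definedness of~(\ref{PRMX:H}) plays the role you assign to it.
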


\ifTR
\begin{proof}
By induction on $\Phi$; in the proof we write $\vct{C}$ instead of $\vct{C}\aN$
for the sake of readability. \\[1em]

\noindent
{\bf Case $a$}:\\
$\vct{C} \models_{\calD\aN} a$ if and only if 
$
a \in \slf_1(C_{[1]})\,  \RED{\cup\, \slf_g(\vct{C})}
$, 
by definition of $\models_{\calD\aN}$.\\
$\map\aN(\vct{C}) = \langle \vct{C}_{[1]},\vct{M}\aN (\vct{C})\rangle$, by definition of $\map\aN$.\\
Clearly, \RED{if $a\in \slf_1(C_{[1]})$, then} $\vct{C} \models_{\calD\aN} a$ if and only if 
$\map\aN(\vct{C}) \models_{\map\calD\aN} a$, by definition of $\models_{\map\calD\aN}$.
\RED{If $a\in \slf_g(\vct{C})$, then $\vct{C} \models_{\calD\aN} a$ if and only if 
$\map\aN(\vct{C}) \models_{\map\calD\aN} a$, by definition of $\models_{\map\calD\aN}$
and of $\semint{\expsem}{\mbox{bexp}_a}$.}\\[1em]

\noindent
{\bf Case $\neg \Phi$}:\\
$\vct{C} \models_{\calD\aN} \neg \Phi$ if and only if  
$\mbox{not } \vct{C} \models_{\calD\aN} \Phi$, by definition of $\models_{\calD\aN}$.
Thus, 
$\mbox{not } \map\aN(\vct{C}) \models_{\map\calD\aN} \Phi$, by the induction hypothesis, and
$\map\aN(\vct{C}) \models_{\map\calD\aN} \neg \Phi$, by definition of $\models_{\map\aN}$.\\[1em]

\noindent
{\bf Case $\Phi_1 \vee \, \Phi_2$}:\\
$\vct{C} \models_{\calD\aN} \Phi_1 \vee \, \Phi_2$ if and only if  
$\vct{C} \models_{\calD\aN} \Phi_1$ or 
$\vct{C} \models_{\calD\aN} \Phi_2$, by definition of $\models_{\calD\aN}$.
Thus, 
$\map\aN(\vct{C}) \models_{\map\calD\aN} \Phi_1$ or
$\map\aN(\vct{C}) \models_{\map\calD\aN} \Phi_2$, by the induction hypothesis, and
$\map\aN(\vct{C}) \models_{\map\calD\aN} \Phi_1 \vee \, \Phi_2$, by definition of $\models_{\map\calD\aN}$.\\[1em]

\noindent
{\bf Case $\Prob{\bowtie}{p}{\lnxt \, \Phi}$}:\\

\noindent
$
\deriv
\Pr \{ \rho \in \Paths_{\map\calD\aN}(\map\aN(\vct{C})) \mid \rho \models_{\map\calD\aN} \nxt \, \Phi \}
\hint{=}{Def. $\rho \models_{\map\calD\aN} \lnxt \, \Phi$}
\sum_{\langle c', \vct{m}' \rangle: \langle c', \vct{m}' \rangle \models_{\map\calD\aN} \Phi}\hdtm\aN_{\map\aN(\vct{C}),\langle c', \vct{m}' \rangle}
\hint{=}{Def. $\hdtm\aN$}
\sum_{\langle c', \vct{m}' \rangle: \langle c', \vct{m}' \rangle \models_{\map\calD\aN} \Phi}
\sum_{\vct{C}': \map\aN(\vct{C}') =  \langle c', \vct{m}' \rangle}
\sdtm\aN_{\vct{C},\vct{C}'}
\hint{=}{I.H., i.e. $\map\aN(\vct{C}') \models_{\map\calD\aN} \Phi$ iff 
$\vct{C}' \models_{\calD\aN} \Phi$}
\sum_{\vct{C}' : \vct{C}' \models_{\calD\aN} \Phi}\sdtm\aN_{\vct{C},\vct{C}'}
\hint{=}{Def. $\sigma \models_{\calD\aN} \lnxt \, \Phi$}
\Pr \{ \sigma \in \Paths_{\calD\aN}(\vct{C}) \mid \sigma \models_{\calD\aN} \nxt \, \Phi \}
$

\noindent
{\bf Case $\Prob{\bowtie}{p}{\Phi_1 \rU{\le k}{} \Phi_2}$}:\\
\noindent
We prove the assert by (nested) induction on $k$.\\[1em]
{\bf Base case ($k=0$)}:\\
$
\deriv
\Pr \{ \rho \in \Paths_{\map\calD\aN}(\map\aN(\vct{C})) \mid 
\rho \models_{\map\calD\aN} \, \Phi_1 \rU{\le 0}{} \Phi_2\}
\hint{=}{Def. of $\rho \models_{\map\calD\aN} \, \Phi_1 \rU{\le k}{} \Phi_2$}
\Pr \{ \rho \in \Paths_{\map\calD\aN}(\map\aN(\vct{C})) \mid 
\rho[0] \models_{\map\calD\aN}  \Phi_2\}
\hint{=}{Def. of $\Paths_{\map\calD\aN}(\map\aN(\vct{C}))$ and $\rho[0]$}
\left\{
\begin{array}{l}
1, \mbox{ if } \map\aN(\vct{C})  \models_{\map\calD\aN}  \Phi_2,\\\\
0, \mbox{ if } \mbox{ not }\map\aN(\vct{C})  \models_{\map\calD\aN}  \Phi_2
\end{array}
\right.
\hint{=}{I. H. on logic formulas}
\left\{
\begin{array}{l}
1, \mbox{ if } \vct{C}  \models_{\calD\aN}  \Phi_2,\\\\
0, \mbox{ if } \mbox{ not }\vct{C}  \models_{\calD\aN}  \Phi_2
\end{array}
\right.
\hint{=}{Def. of $\Paths_{\calD\aN}(\vct{C})$ and $\sigma[0]$}
\Pr \{ \sigma \in \Paths_{\calD\aN}(\vct{C}) \mid 
\sigma[0] \models_{\calD\aN}  \Phi_2\}
\hint{=}{Def. of $\sigma \models_{\calD\aN} \, \Phi_1 \rU{\le k}{} \Phi_2$}
\Pr \{ \sigma \in \Paths_{\calD\aN}(\vct{C}) \mid 
\sigma \models_{\calD\aN} \, \Phi_1 \rU{\le 0}{} \Phi_2\}
$\\[1em]
\noindent
{\bf Induction step}:\\

\noindent
$
\deriv
\Pr \{ \rho \in \Paths_{\map\calD\aN}(\map\aN(\vct{C})) \mid 
\rho \models_{\map\calD\aN} \, \Phi_1 \rU{\le k+1}{} \Phi_2\}
\hint{=}{Def. $\rho \models_{\map\calD\aN} \, \Phi_1 \rU{\le k+1}{} \Phi_2$}
\left\{
\begin{array}{l}
0, \mbox{ if not } \map\aN(\vct{C})  \models_{\map\calD\aN}  \Phi_1 \mbox{ and not }
\map\aN(\vct{C})  \models_{\map\calD\aN}  \Phi_2\\\\
1, \mbox{ if } \map\aN(\vct{C})  \models_{\map\calD\aN}  \Phi_2,\\\\
\sum_{\rho[1]: \rho[1] \models_{\map\calD\aN} \Phi_1} 
\hdtm\aN_{\map\aN(\vct{C}),\rho[1]} \cdot
\Pr \{ \rho' \in \Paths_{\map\calD\aN}(\rho[1]) \mid 
\rho' \models_{\map\calD\aN} \, \Phi_1 \rU{\le k}{} \Phi_2\}
\end{array}
\right.
\hint{=}{Def. $\hdtm\aN$}
\left\{
\begin{array}{l}
0, \mbox{ if not } \map\aN(\vct{C})  \models_{\map\calD\aN}  \Phi_1 \mbox{ and not }
\map\aN(\vct{C})  \models_{\map\calD\aN}  \Phi_2,\\\\
1, \mbox{ if } \map\aN(\vct{C})  \models_{\map\calD\aN}  \Phi_2\\\\
\sum_{\rho[1]: \rho[1] \models_{\map\calD\aN} \Phi_1}\\\hspace{0.2in}
\sum_{\vct{C}': \map\aN(\vct{C}') = \rho[1]}
\sdtm\aN_{\vct{C},\vct{C}'} \cdot
\Pr \{ \rho' \in \Paths_{\map\calD\aN}(\map\aN(\vct{C}')) \mid 
\rho' \models_{\map\calD\aN} \, \Phi_1 \rU{\le k}{} \Phi_2\}
\end{array}
\right.
\hint{=}{I.H. on $k$}
\left\{
\begin{array}{l}
0, \mbox{ if not } \map\aN(\vct{C})  \models_{\map\calD\aN}  \Phi_1 \mbox{ and not }
\map\aN(\vct{C})  \models_{\map\calD\aN}  \Phi_2,\\\\
1, \mbox{ if } \map\aN(\vct{C})  \models_{\map\calD\aN}  \Phi_2\\\\
\sum_{\rho[1]: \rho[1] \models_{\map\calD\aN} \Phi_1}\\\hspace{0.2in}
\sum_{\vct{C}': \map\aN(\vct{C}') = \rho[1]}
\sdtm\aN_{\vct{C},\vct{C}'} \cdot
\Pr \{ \sigma' \in \Paths_{\calD\aN}(\vct{C}') \mid 
\sigma' \models_{\calD\aN} \, \Phi_1 \rU{\le k}{} \Phi_2\}
\end{array}
\right.
\hint{=}{I.H. on logic formulas}
\left\{
\begin{array}{l}
0,  \mbox{ if not } \vct{C}  \models_{\calD\aN}  \Phi_1 \mbox{ and not }
 \vct{C}  \models_{\calD\aN}  \Phi_2\\\\
1, \mbox{ if } \vct{C}  \models_{\calD\aN}  \Phi_2\\\\
\sum_{\vct{C}': \vct{C}' \models_{\calD\aN} \Phi_1}
\sdtm\aN_{\vct{C},\vct{C}'} \cdot
\Pr \{ \sigma' \in \Paths_{\calD\aN}(\vct{C}') \mid 
\sigma' \models_{\calD\aN} \, \Phi_1 \rU{\le k}{} \Phi_2\}
\end{array}
\right.
\hint{=}{Def. $\sigma \models_{\calD\aN} \, \Phi_1 \rU{\le k+1}{} \Phi_2$}
\Pr \{ \sigma \in \Paths_{\calD\aN}(\vct{C}) \mid 
\sigma \models_{\calD\aN} \, \Phi_1 \rU{\le k+1}{} \Phi_2\}
$\hfill$\Box$

\end{proof}
\fi

%

We now consider the stochastic process $\map\calD(t)$ defined
below, for $c_0, c,c' \in \sc{\os}$, $\boldsymbol{\mu}_0, \vct{m}, \vct{m}' \in \us{S}$  
and function  $\otm(\vct{m})_{c,c'}$, continuous in $\vct{m}$:


\begin{equation}
\label{Def:HD}
\begin{array}{l}
\Pr\SET{\map\calD (0) = \langle c, \vct{m} \rangle} =
\dirac_{\langle c_0, \boldsymbol{\mu}_0 \rangle}(\langle c, \vct{m} \rangle),\\
\Pr\SET{\map\calD (t+1) \!= \!\langle c', \vct{m}' \rangle \!\mid \!\map\calD (t) =
\langle c, \vct{m} \rangle} \!= \hspace{-0.04in} 
\left\{
\begin{array}{ll}
\otm(\vct{m})_{c, c'}, &\mbox{if } \vct{m}' = \vct{m} \cdot \otm(\vct{m})\\
0, &\mbox{otherwise}.
\end{array}
\right.
\end{array}
\end{equation}

\noindent
The definition of the labeling function $\slf_{\map\calD}$ is the same as that of
$\slf_{\map\calD\aN}$. 
Note that $\map\calD$ is a DTMC with initial state $\langle c_0, \boldsymbol{\mu}_0 \rangle$; 
memoryless-ness as well as time homogeneity directly
follow from the definition of the process (\ref{Def:HD}).
The definitions of paths  for state $\langle c,\vct{m}\rangle$ of $\map\calD$,  
$\Paths_{\map\calD}(\langle c,\vct{m}\rangle)$, 
of $\at_{\map\calD}(t)$ and of the
satisfaction relation $\models_{\map\calD}$
of PCTL formulas against  $\map\calD(t)$ are obtained by instantiating the relevant definitions of Sect.~\ref{Logic}
to the model $\map\calD(t)$.
Furthermore, define function $\boldsymbol{\mu}(t)$ as follows:
$\boldsymbol{\mu}(0) = \boldsymbol{\mu}_0$ and 
$\boldsymbol{\mu}(t+1) = \boldsymbol{\mu}(t) \cdot \otm(\boldsymbol{\mu}(t))$;
then, for $t \geq 0$ and for $\langle c,\vct{m}\rangle \in \at_{\map\calD}(t)$
we have $\vct{m} = \boldsymbol{\mu}(t)$.

In the following we use the fundamental result stated below, due to Le Boudec et al.~\cite{BMM07}.
\ifTR%
We recall that, for each $N$, the occupancy measure process 
$\vct{M}\aN(t)$ is the stochastic process $\vct{M}\aN(t)= \vct{M}\aN(\vct{X}\aN(t))$,
with initial distribution $\dirac_{\vct{M}\aN(\vct{C_0\aN})}$ where $\vct{C_0\aN}$
is the initial state vector of the given system specification.
\fi%

\begin{quote}
\noindent\\
{\bf Theorem 4.1 of~\cite{BMM07}}
{\em
Assume that for all $c,c' \in \sc{\os}$, there 
exists  function  $\otm(\vct{m})_{c,c'}$, continuous in $\vct{m}$, such that,
for $N \rightarrow \infty$, $\otm\aN(\vct{m})_{c,c'}$ converges uniformly
in $\vct{m}$ to $\otm(\vct{m})_{c,c'}$.
Assume, furthermore, that there exists $\boldsymbol{\mu}_0 \in  \us{S}$ such that 
$\vct{M}\aN (\vct{C_0\aN})$ converges almost surely to $\boldsymbol{\mu}_0$.
Define function $\boldsymbol{\mu}(t)$ of $t$ as follows:
$\boldsymbol{\mu}(0) = \boldsymbol{\mu}_0$ and 
$\boldsymbol{\mu}(t+1) = \boldsymbol{\mu}(t) \cdot \otm(\boldsymbol{\mu}(t))$.
Then, for any fixed $t$, almost surely
$
\lim_{N\rightarrow\infty} \vct{M}\aN(t) = \boldsymbol{\mu}(t).
\endstat
$
}\\
\end{quote}

\begin{remark}
\label{GConvergence}
We observe that, as direct consequence of Theorem 4.1 of~\cite{BMM07}
and of the restrictions on the definition of $\mbox{BExp}$, 
for any fixed $t$ and for all $\epsilon >0$, there exists $\bar{N}$ such that,
for all $N\geq\bar{N}$, almost surely 
$$
\mid \semint{\expsem}{\mbox{bexp}}_{\vct{m}} -
\semint{\expsem}{\mbox{bexp}}_{\boldsymbol{\mu}(t)} \mid\, < \epsilon
$$
for all $\langle c,\vct{m}\rangle \in \at_{\map\calD\aN}(t)$ and $\mbox{bexp} \in \mbox{BExp}$.
In other words, for $N$ large enough and  $\langle c,\vct{m}\rangle \in \at_{\map\calD\aN}(t)$,
$\slf_g(\langle c,\vct{m}\rangle)= \slf_g(\langle c,\boldsymbol{\mu}(t)\rangle)$, and, consequently,
$\slf(\langle c,\vct{m}\rangle)= \slf(\langle c,\boldsymbol{\mu}(t)\rangle).\endstat$
\end{remark}
In the rest of the paper we will  focus on sequences
$\left(\langle \os, A, \vct{C_0} \rangle\aN\right)_{N\geq N_0}$ of system specifications, for some $N_0 > 0$. 
In particular, we will consider only sequences  $\left(\map\calD\aN(t)\right)_{N\geq N_0}$
such that for all $N\geq N_0$, $\vct{C_0}^{(N)}_{[1]}=\vct{C_0}^{(N_0)}_{[1]}$;
in other words we want the population size increase with $N$, while the (initial state of the) first 
object of the system is left unchanged.

Let us now go back to process $\map\calD(t)$, where, in equation (\ref{Def:HD}) we use function 
$\otm(\vct{m})_{c,c'}$ of the hypothesis of the theorem recalled above; 
similarly, 
for the initial distribution we use $\dirac_{\langle \vct{C_{0[1]}\aN}, \boldsymbol{\mu}(0) \rangle}$.

The following is a corollary of Theorem 4.1 and Theorem 5.1 (Fast simulation) presented in~\cite{BMM07},
when considering sequences $\left(\map\calD\aN(t)\right)_{N\geq N_0}$ as above
(see also Remark~\ref{GConvergence}):
\begin{corollary}
\label{MFAC}
Under the assumptions of Theorem 4.1 of~\cite{BMM07}, 
for any fixed $t$, almost surely,
$
\lim_{N\rightarrow\infty} \map\calD\aN(t) = \map\calD(t). 
\endstat
$
\end{corollary}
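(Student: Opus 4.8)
The plan is to reduce Corollary~\ref{MFAC} to the two main results of~\cite{BMM07} that are explicitly quoted in the excerpt, namely Theorem~4.1 (convergence of the occupancy measure $\vct{M}\aN(t)$ to the deterministic trajectory $\boldsymbol{\mu}(t)$) and Theorem~5.1 (fast simulation, i.e.\ convergence of the trajectory of a single tagged object, driven by the time-inhomogeneous kernels $\otm\aN(\boldsymbol{\mu}(t))$, to the trajectory of the limit object driven by $\otm(\boldsymbol{\mu}(t))$). Recall that $\map\calD\aN(t) = \map\aN(\vct{X}\aN(t)) = \langle \vct{X}\aN(t)_{[1]},\, \vct{M}\aN(t)\rangle$, and that $\map\calD(t)$ has, by the very definition (\ref{Def:HD}) together with the observation that $\vct{m}=\boldsymbol{\mu}(t)$ on $\at_{\map\calD}(t)$, the form $\langle Y(t), \boldsymbol{\mu}(t)\rangle$ where $Y(t)$ is the Markov chain on $\sc{\os}$ with (deterministic, time-inhomogeneous) kernels $\otm(\boldsymbol{\mu}(t))$ started from $\vct{C_0}_{[1]}$. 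So the claimed limit is really a \emph{joint} convergence statement: the pair (first local state, occupancy measure) of the $N$-object system converges almost surely, for each fixed $t$, to the pair (limit tagged object, deterministic occupancy).

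The key steps, in order, would be: (i) Fix $t$. Apply Theorem~4.1 of~\cite{BMM07} — whose hypotheses are exactly the standing assumptions (uniform convergence of $\otm\aN(\cdot)_{c,c'}$ to the continuous $\otm(\cdot)_{c,c'}$, and a.s.\ convergence of $\vct{M}\aN(\vct{C_0\aN})$ to $\boldsymbol{\mu}_0$), plus the restriction that $\vct{C_0}^{(N)}_{[1]}$ is held fixed — to get $\vct{M}\aN(t)\to\boldsymbol{\mu}(t)$ almost surely. This handles the second component. (ii) Apply the fast simulation result (Theorem~5.1 of~\cite{BMM07}): under the same hypotheses, the tagged object $\vct{X}\aN(t)_{[1]}$ — whose one-step law given the current global state is the row $\otm\aN(\vct{M}\aN(t))_{\vct{X}\aN(t)_{[1]},\cdot}$ — converges almost surely, for each fixed $t$, to $Y(t)$, the object evolving under the limit kernels $\otm(\boldsymbol{\mu}(t))$; here one uses that $\sc{\os}$ is finite, so a.s.\ convergence of a discrete-valued process at a fixed time just means the two coincide for $N$ large. (iii) Combine: on the almost-sure event on which both (i) and (ii) hold (a finite intersection of a.s.\ events, hence a.s.), the pair $\langle \vct{X}\aN(t)_{[1]}, \vct{M}\aN(t)\rangle$ equals $\langle Y(t),\boldsymbol{\mu}(t)\rangle$ for $N$ large enough, i.e.\ $\map\aN(\vct{X}\aN(t))\to\map\calD(t)$ almost surely. (iv) Finally, note that this convergence is as processes indexed by the fixed horizon: one may either argue stepwise for $t=0,1,\dots$ using the recursive structure of (\ref{Def:HD}) and a union bound over the finitely many steps up to any bound, or simply invoke that the quoted theorems already give the statement "for any fixed $t$".

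The main obstacle is step (ii)/(iii): making precise that the \emph{tagged object} in the $N$-object system (as opposed to the abstract "fast simulation" object of~\cite{BMM07}, which is defined as evolving under $\otm\aN(\boldsymbol{\mu}\aN(t))$ or under $\otm(\boldsymbol{\mu}(t))$) is the relevant process, and that its coupling with the empirical measure is handled by Theorem~5.1. Concretely, one must check that the hypotheses under which~\cite{BMM07}'s Theorem~5.1 is stated are met by our SIO model — in particular that all states are probabilistic in the relevant occupancy measures (so that $\otm\aN(\vct{m})$ and $\otm(\vct{m})$ are genuine stochastic matrices along the trajectory), and that the continuity/uniform-convergence assumptions on $\otm\aN$ propagate, via the continuity of $\expsem$ on $\mathrm{BExp}$ expressions used in labelling (cf.\ Remark~\ref{GConvergence}), to the object under study. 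Once the translation of our notation into the hypotheses of Theorems~4.1 and~5.1 of~\cite{BMM07} is spelled out, the corollary is immediate; the bulk of the work is bookkeeping rather than new probabilistic content, since the heavy lifting (propagation of chaos / mean-field convergence) is entirely imported.
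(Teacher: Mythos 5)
Your proposal is correct and follows essentially the same route as the paper, which states this corollary without a separate proof, presenting it as an immediate consequence of Theorem~4.1 (convergence of the occupancy measure) and Theorem~5.1 (fast simulation of the tagged object) of Le Boudec et al., combined for the two components of $\map\aN(\vct{X}\aN(t))$ exactly as in your steps (i)--(iii). Your additional care about the coupling of the tagged object with the empirical measure and about the hypotheses of Theorem~5.1 is a reasonable elaboration of what the paper leaves implicit.
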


\begin{remark}
\label{APPR}
A consequence of
Corollary~\ref{MFAC} is that, under the assumptions of Theorem 4.1 of~\cite{BMM07}, 
for any fixed $t$, almost surely, 
for $N$ to $\infty$, we have that,
for all $\langle c,\vct{m}\rangle \in \at_{\map\calD\aN}(t,c)$ and $c' \in \sc{\os}$,
$
\sum_{\langle c',\vct{m}'\rangle: {\at_{\map\calD\aN}}(t+1,c')}
\hdtm\aN_{\langle c,\vct{m}\rangle,\langle c',\vct{m}' \rangle}
$
approaches
$\otm(\boldsymbol{\mu}(t))_{c,c'}$, i.e. for all $\epsilon>0$ there exists
$N_0$ s.t. for all $N \geq N_0$
$$
\left|
\left(
\sum_{\langle c',\vct{m}'\rangle: {\at_{\map\calD\aN}}(t+1,c')}
\hdtm\aN_{\langle c,\vct{m}\rangle,\langle c',\vct{m}'\rangle}
\right)
-
\otm(\boldsymbol{\mu}(t))_{c,c'}
\right|
< 
\epsilon
$$
$\endstat$
\end{remark}

\ifTR%
In the sequel we state the main theorem of the present paper. 
For {\em large} population sizes $N$, the probability of the set of paths of 
$\map\calD\aN(t)$ satisfying a formula $\varphi$, {\em approaches} 
the probability of the set of paths of $\map\calD(t)$ that satisfy $\varphi$. 
In order to guarantee that a formula $\Phi$ of form $\Prob{\bowtie}{p}{\varphi}$ holds of a state $s$ of $\map\calD(t)$ iff it holds of the corresponding state of $\map\calD\aN(t)$, it is sufficient that  $\Phi$ is {\em safe} with reference to  $\map\calD(t)$:
\else
In the sequel we state the main theorem of the present paper, that relies on the notion of {\em formulae safety}, with w.r.t.  $\map\calD(t)$:
\fi
a formula $\Phi$ is {\em safe} for a model $\calM$ iff for all sub-formulae $\Phi'$ of $\Phi$
and states $s$ of $\calM$, if $\Phi'$ is of the form $\Prob{\bowtie}{p}{\varphi}$ then  
$\Pr \{ \eta \in \Paths_{\calM}(s) \mid \eta \models_{\calM} \varphi \} \not= p$.

The theorem, together with Theorem~\ref{LMC:CORR} and  Lemma~\ref{CorrAbs}, establishes the formal relationship between the satisfaction relation on the 
exact semantics of the language and that on its mean-field approximation, thus justifying
the fast local model-checking instantiation we will show in the sequel.

\begin{theorem}
\label{CorrAprx}
Under the assumptions of 
Theorem 4.1 of~\cite{BMM07}, 
for all safe formulas $\Phi$,
for any fixed $t$ and  
$\map\aN(\vct{C}^{\aN}) \in \at_{\map\calD\aN}(t)$, 
almost surely,  for  $N$ large enough,
$
\map\aN(\vct{C}^{\aN}) \models_{\map\calD\aN} \Phi \mbox{ iff } 
\langle \vct{C}^{\aN}_{[1]}, \boldsymbol{\mu}(t) \rangle \models_{\map\calD} \Phi.
\endstat
$
\end{theorem}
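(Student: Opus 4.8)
The plan is to prove the statement by structural induction on $\Phi$, in the slightly strengthened, uniform form: for any fixed $t$, almost surely, for $N$ large enough the equivalence $\map\aN(\vct{C}\aN)\models_{\map\calD\aN}\Phi \iff \langle\vct{C}\aN_{[1]},\boldsymbol\mu(t)\rangle\models_{\map\calD}\Phi$ holds simultaneously for \emph{all} $\map\aN(\vct{C}\aN)\in\at_{\map\calD\aN}(t)$. This uniformity is what makes the induction go through at the next-operator and bounded-until cases, and it is already the form in which Remark~\ref{GConvergence} is phrased; note also that the right-hand state is well-chosen, since every state of $\at_{\map\calD}(t)$ has occupancy measure exactly $\boldsymbol\mu(t)$. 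Since $\Phi$ safe for $\map\calD$ entails that every sub-formula of $\Phi$ is safe for $\map\calD$, the induction hypothesis is always applicable. The base case $\Phi=a$ is Remark~\ref{GConvergence}: the $\sls_1$-part of the labelling depends only on the first component and coincides on both sides, and the $\sls_g$-part coincides for $N$ large by the continuity argument of that remark. The cases $\neg\Phi$ and $\Phi_1\vee\Phi_2$ follow from the induction hypothesis by taking the larger of the two thresholds $\bar N$ and the union of the two exceptional null sets.

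The substantive case is $\Phi=\Prob{\bowtie}{p}{\varphi}$. Here I would first establish an auxiliary claim: for any fixed $t$, almost surely, $\Pr\{\eta\in\Paths_{\map\calD\aN}(\map\aN(\vct{C}\aN))\mid\eta\models_{\map\calD\aN}\varphi\}$ converges, as $N\to\infty$ and uniformly over $\map\aN(\vct{C}\aN)\in\at_{\map\calD\aN}(t)$, to $\Pr\{\eta\in\Paths_{\map\calD}(\langle\vct{C}\aN_{[1]},\boldsymbol\mu(t)\rangle)\mid\eta\models_{\map\calD}\varphi\}$. Granting the claim, safety of $\Phi$ makes this limiting probability different from $p$; choosing $\epsilon$ smaller than the distance between them, $\epsilon$-closeness for $N$ large forces the comparison $\bowtie p$ to yield the same Boolean value on both sides, which is the assertion of the theorem for $\Phi$. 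The auxiliary claim is proved by cases on $\varphi$. For $\varphi=\lnxt\Phi'$ the $N$-probability equals $\sum_{c'}\bigl(\sum_{\vct{m}':\,\langle c',\vct{m}'\rangle\in\at_{\map\calD\aN}(t+1,c'),\ \langle c',\vct{m}'\rangle\models_{\map\calD\aN}\Phi'}\hdtm\aN_{\map\aN(\vct{C}\aN),\langle c',\vct{m}'\rangle}\bigr)$; by the uniform induction hypothesis for $\Phi'$ at time $t+1$, for $N$ large whether $\langle c',\vct{m}'\rangle\models_{\map\calD\aN}\Phi'$ depends only on $c'$ and agrees with $\langle c',\boldsymbol\mu(t+1)\rangle\models_{\map\calD}\Phi'$, so the inner sum is either the full mass on $c'$ or $0$; Remark~\ref{APPR} then says the full mass on $c'$ tends to $\otm(\boldsymbol\mu(t))_{c,c'}$, and the whole expression tends to $\sum_{c':\,\langle c',\boldsymbol\mu(t+1)\rangle\models_{\map\calD}\Phi'}\otm(\boldsymbol\mu(t))_{c,c'}$, which is exactly the $\map\calD$-probability, since in $\map\calD$ the successor of $\langle c,\boldsymbol\mu(t)\rangle$ with first component $c'$ is $\langle c',\boldsymbol\mu(t+1)\rangle$ with probability $\otm(\boldsymbol\mu(t))_{c,c'}$.

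For $\varphi=\Phi_1\rU{\le k}\Phi_2$ I would nest an induction on $k$. The base $k=0$ reduces the probability to the truth value of $\Phi_2$ and hence to the induction hypothesis. For $k+1$, the one-step unfolding used in the proof of Lemma~\ref{CorrAbs} yields $0$, $1$, or $\sum_{\langle c',\vct{m}'\rangle\models_{\map\calD\aN}\Phi_1}\hdtm\aN_{\map\aN(\vct{C}\aN),\langle c',\vct{m}'\rangle}\cdot\Pr\{\eta'\models_{\map\calD\aN}\Phi_1\rU{\le k}\Phi_2\}$, and one combines the induction hypothesis on $\Phi_1$ (so $\Phi_1$-membership depends only on $c'$ for $N$ large), the inner induction hypothesis on $k$ at time $t+1$ (so the continuation probability from $\langle c',\vct{m}'\rangle$ tends to that from $\langle c',\boldsymbol\mu(t+1)\rangle$, uniformly over the varying $\vct{m}'$), and Remark~\ref{APPR} for the grouped transition masses, to get convergence to the corresponding $\map\calD$-probability. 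Together with Theorem~\ref{LMC:CORR} and Lemma~\ref{CorrAbs}, this closes the argument.

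The main obstacle I expect is quantifier bookkeeping rather than any single hard estimate. Every use of the induction hypothesis and of Remarks~\ref{GConvergence}/\ref{APPR} produces its own threshold $\bar N$ and its own exceptional null set, and in the next-operator and bounded-until cases these are invoked at the successor level $t+1$ and, for the bounded-until operator, recursively down to level $t+k$. One has to argue that only finitely many such invocations occur — the formula and the bound $k$ are fixed, and $\at_{\map\calD\aN}(t+i)$ is finite for each $N$ — so that the supremum of the thresholds is finite and the finite union of null sets is still null, and that the uniform-over-$\at_{\map\calD\aN}(t)$ strengthening is preserved at every step. The genuinely non-routine point is conceptual: in $\map\calD\aN$ a given first-component $c'$ is reached through many distinct occupancy measures $\vct{m}'$, whereas $\map\calD$ sees only $\boldsymbol\mu(t+1)$; collapsing the former into the latter is precisely what Remark~\ref{APPR} (convergence of the lumped transition mass to $\otm(\boldsymbol\mu(t))_{c,c'}$), together with the induction hypothesis forcing truth values and continuation probabilities to depend, in the limit, only on $c'$, is designed to accomplish.
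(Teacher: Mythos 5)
Your proposal is correct and follows essentially the same route as the paper's proof: structural induction on $\Phi$ with the probabilistic cases handled by showing the $\map\calD\aN$- and $\map\calD$-path probabilities approach each other (via the induction hypothesis forcing satisfaction at time $t+1$ to depend only on the first component, Remark~\ref{APPR} for the lumped transition masses, and a nested induction on $k$ for bounded until), after which safety converts $\epsilon$-closeness into agreement of the comparison with $p$. Your explicit attention to the uniformity over $\at_{\map\calD\aN}(t)$ and the finiteness of the thresholds and null sets is exactly the bookkeeping the paper performs implicitly when it takes $\hat{N}$ larger than all the $\bar{N}_{\vct{C}'}$ and $N_{\bar{c}}$.
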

\begin{proof}
The proof is carried out by induction on $\Phi$; in the proof we write $\vct{C}$ instead of $\vct{C}\aN$ for the sake of readability.\\[1em]
\ifTR
\noindent
{\bf Case $a$}:\\
The assert  follows directly from the definitions of $\left(\map\calD\aN(t)\right)_{N\geq N_0}$, 
$\models_{\map\calD\aN}$, and $\models_{\map\calD}$ (see also Remark~\ref{GConvergence}).\\[1em]
%
\noindent
{\bf Case $\neg \Phi$}:\\
The I. H. ensures that, for any fixed $t$ and $\map\aN(\vct{C}) \in \at_{\map\calD\aN}(t)$, 
a.s., there exists $\bar{N}$ s.t. for all $N \geq \bar{N}$, 
$\map\aN(\vct{C}) \models_{\map\calD\aN} \Phi$ iff
$\langle \vct{C}_{[1]}, \boldsymbol{\mu}(t) \rangle \models_{\map\calD} \Phi$.
But 
$\map\aN(\vct{C}) \models_{\map\calD\aN} \Phi$ iff
$\langle \vct{C}_{[1]}, \boldsymbol{\mu}(t) \rangle \models_{\map\calD} \Phi$
is logically equivalent to 
$$
\mbox{not }\map\aN(\vct{C}) \models_{\map\calD\aN} \Phi \mbox{ iff }
\mbox{not } \langle \vct{C}_{[1]}, \boldsymbol{\mu}(t) \rangle \models_{\map\calD} \Phi.
$$
Thus, by definition of $\models_{\map\calD\aN}$ and $\models_{\map\calD}$, we get
that, for any fixed $t$ and $\map\aN(\vct{C}) \in \at_{\map\calD\aN}(t)$, 
a.s., there exists $\bar{N}$ s.t. for all $N \geq \bar{N}$,
$\map\aN(\vct{C}) \models_{\map\calD\aN} \neg \, \Phi$ iff
$\langle \vct{C}_{[1]}, \boldsymbol{\mu}(t) \rangle \models_{\map\calD} \neg \, \Phi$.\\[1em]
\noindent
{\bf Case $\Phi_1 \vee \Phi_2$}:\\
The I. H. ensures that, for any fixed $t$ and $\map\aN(\vct{C}) \in \at_{\map\calD\aN}(t)$, 
a.s., there exists $\bar{N}_1$ s.t. for all $N \geq \bar{N}_1$, 
$\map\aN(\vct{C}) \models_{\map\calD\aN} \Phi_1$ iff
$\langle \vct{C}_{[1]}, \boldsymbol{\mu}(t) \rangle \models_{\map\calD} \Phi_1$, and
a.s., there exists $\bar{N}_2$ s.t. for all $N \geq \bar{N}_2$, 
$\map\aN(\vct{C}) \models_{\map\calD\aN} \Phi_2$ iff
$\langle \vct{C}_{[1]}, \boldsymbol{\mu}(t) \rangle \models_{\map\calD} \Phi_2$.\\
Let us now suppose  
$\map\aN(\vct{C}) \models_{\map\calD\aN} \Phi_1 \vee \Phi_2$ holds, i.e.
$\map\aN(\vct{C}) \models_{\map\calD\aN} \Phi_1$ holds or
$\map\aN(\vct{C}) \models_{\map\calD\aN} \Phi_2$ holds, by definition
of $\models_{\map\calD\aN}$. 

Suppose
$\map\aN(\vct{C}) \models_{\map\calD\aN} \Phi_1$ holds and, by the
I.H., we know that, a.s.  there exists $\bar{N}_1$ s.t. for all $N \geq \bar{N}_1$
$\langle \vct{C}_{[1]}, \boldsymbol{\mu}(t) \rangle \models_{\map\calD} \Phi_1$
holds as well. But then, we get that, for all such $N$, also
$\langle \vct{C}_{[1]}, \boldsymbol{\mu}(t) \rangle \models_{\map\calD} \Phi_1  \vee \Phi_2$ holds, by definition of $\models_{\map\calD}$. If, instead $\map\aN(\vct{C}) \models_{\map\calD\aN} \Phi_2$ holds,
we get the same result, using $\bar{N}_2$ instead of $\bar{N}_1$. Thus, for any
fixed $t$ and $\map\aN(\vct{C}) \in \at_{\map\calD\aN}(t)$, 
a.s. there exists $N \geq \max\SET{\bar{N}_1, \bar{N}_2}$ such that
if $\map\aN(\vct{C}) \models_{\map\calD\aN} \Phi_1 \vee \Phi_2$ holds, then
$\langle \vct{C}_{[1]}, \boldsymbol{\mu}(t) \rangle \models_{\map\calD} \Phi_1 \vee \Phi_2$ holds. \\
The proof for the reverse implication is similar.\\
\noindent
{\bf Case $\Prob{\bowtie}{p}{\lnxt \, \Phi}$}:\\
By definition of $\models_{\map\calD\aN}$ and $\models_{\map\calD}$,
we have to show that, for any fixed $t$ and $\map\aN(\vct{C}) \in \at_{\map\calD\aN}(t)$,
a.s., for $N$ large enough,
$$
\Pr \{ \rho \in \Paths_{\map\calD\aN}(\map\aN(\vct{C})) \mid \rho \models_{\map\calD\aN} \lnxt \, \Phi \} \bowtie p
$$ 
iff
$$
\Pr \{ \eta \in \Paths_{\map\calD}(\langle \vct{C_{[1]}},\boldsymbol{\mu}(t)\rangle) \mid \eta \models_{\map\calD} \lnxt \, \Phi \} \bowtie p.
$$
Below, we actually prove that, for any fixed $t$ and $\map\aN(\vct{C}) \in \at_{\map\calD\aN}(t)$,
 a.s., for $N$ large enough,
the probabilities of the two sets of paths are approaching each other, which implies the assert.\\
\noindent
$
\Pr \{ \rho \in \Paths_{\map\calD\aN}(\map\aN(\vct{C})) \mid \rho \models_{\map\calD\aN} \lnxt \, \Phi \}
$
is defined as
\begin{equation}
\label{pHDN}
p\aN_{\hdtm} = \sum_{\map\aN(\vct{C}'): \map\aN(\vct{C}') \models_{\map\calD\aN} \Phi}\hdtm\aN_{\map\aN(\vct{C}),\map\aN(\vct{C}')}
\end{equation}
and
$
\Pr \{ \eta \in \Paths_{\map\calD}(\langle \vct{C_{[1]}},\boldsymbol{\mu}(t)\rangle) \mid \eta \models_{\map\calD} \lnxt \, \Phi \}
$ 
is defined as 
\begin{equation}
\label{pHD}
p(t)_{\otm}= \sum_{\vct{C}'_{[1]}: \langle \vct{C}'_{[1]}, \boldsymbol{\mu}(t+1) \rangle \models_{\map\calD} \Phi}\otm(\boldsymbol{\mu}(t))_{\vct{C_{[1]}},\vct{C}'_{[1]}}.
\end{equation}
The I.H. ensures that, 
a.s., for $N \geq \bar{N}_{\vct{C}'}$, 
$\map\aN(\vct{C}') \models_{\map\calD\aN} \Phi$ if and only if
$\langle \vct{C}'_{[1]}, \boldsymbol{\mu}(t+1) \rangle \models_{\map\calD} \Phi$,
with $\map\aN(\vct{C}') \in \at_{\map\calD\aN}(t+1)$.
In particular, it holds that, for any specific value $\bar{c}$ of $\vct{C}'_{[1]}$ above
and $\map\aN(\vct{C}') \in \at_{\map\calD\aN}(t+1,\bar{c})$, 
$\map\aN(\vct{C}') \models_{\map\calD\aN} \Phi$ if and only if
$\langle \bar{c}, \boldsymbol{\mu}(t+1) \rangle \models_{\map\calD} \Phi$,
that is: either {\em all} elements of $ \at_{\map\calD\aN}(t+1,\bar{c})$
satisfy $\Phi$ or {\em none} of them does it.
Furthermore, for such $\bar{c}$, by Corollary~\ref{MFAC}, for all $\epsilon>0$ there exists
$N_{\bar{c}}$ s.t. for all $N \geq N_{\bar{c}}$  $$
\left|
\left(
\sum_{\langle \bar{c},\overline{\vct{m}}\rangle: {\at_{\map\calD\aN}}(t+1,\bar{c})}
\hdtm\aN_{\map\aN(\vct{C}),\langle \bar{c},\overline{\vct{m}}\rangle}
\right)
-
\otm(\boldsymbol{\mu}(t))_{\vct{C}_{[1]},\bar{c}}
\right|
< 
\epsilon
$$
(see Remark~\ref{}). So, for any $\epsilon > 0$ 
there exists an $\hat{N}$ larger than any of such
$\bar{N}_{\vct{C}'}$ and $N_{\bar{c}}$, such that for  all $N \geq \hat{N}$  
$
\left| p\aN_{\hdtm} - p(t)_{\otm} \right| < \epsilon
$
i.e.  the value $p\aN_{\hdtm}$ of sum~(\ref{pHDN}) approaches the value $p(t)_{\otm}$ of sum~(\ref{pHD}).
Finally, safety of $\Prob{\bowtie}{p}{\lnxt \, \Phi}$, implies that the value $p(t)_{\otm}$ of~(\ref{pHD}) is 
different from $p$. If $p(t)_{\otm} > p$ then we can choose $\epsilon$ small enough that also 
$p\aN_{\hdtm} > p$ and, similarly,  if $p(t)_{\otm} < p$, we get also $p\aN_{\hdtm} < p$, which proves the assert.\\
\noindent
{\bf Case $\Prob{\bowtie}{p}{\Phi_1 \rU{\le k}{} \Phi_2}$}:\\
\noindent
By definition of $\models_{\map\calD\aN}$ and $\models_{\map\calD}$,
we have to show that, for any fixed $t$ and $\map\aN(\vct{C}) \in \at_{\map\calD\aN}(t)$, 
a.s., for $N$ large enough,
$$
\Pr \{ \rho \in \Paths_{\map\calD\aN}(\map\aN(\vct{C})) \mid \rho \models_{\map\calD\aN} \Phi_1 \rU{\le k}{} \Phi_2 \} \bowtie p
$$ 
iff
$$
\Pr \{ \eta \in \Paths_{\map\calD}(\langle \vct{C_{[1]}},\boldsymbol{\mu}(t)\rangle) \mid \eta \models_{\map\calD} \Phi_1 \rU{\le k}{} \Phi_2 \} \bowtie p.
$$
Below, we actually prove that, for any fixed $t$ and $\map\aN(\vct{C}) \in \at_{\map\calD\aN}(t)$, 
a.s., for $N$ large enough,
the probabilities of the two sets of paths are approaching each other, which implies the assert.
We proceed by induction on $k$, using also the induction hypothesis on the structure of 
the formulas, when necessary.\\
\noindent
{\bf Base case ($k=0$)}:\\
$
\deriv
\Pr \{ \rho \in \Paths_{\map\calD\aN}(\map\aN(\vct{C})) \mid 
\rho \models_{\map\calD\aN} \, \Phi_1 \rU{\le 0}{} \Phi_2\}
\hint{=}{Def. of $\rho \models_{\map\calD\aN} \, \Phi_1 \rU{\le k}{} \Phi_2$}
\Pr \{ \rho \in \Paths_{\map\calD\aN}(\map\aN(\vct{C})) \mid 
\rho[0] \models_{\map\calD\aN}  \Phi_2\}
\hint{=}{Def. of $\Paths_{\map\calD\aN}(\map\aN(\vct{C}))$ and $\rho[0]$}
\left\{
\begin{array}{l}
1, \mbox{ if } \map\aN(\vct{C})  \models_{\map\calD\aN}  \Phi_2,\\\\
0, \mbox{ if } \mbox{ not }\map\aN(\vct{C})  \models_{\map\calD\aN}  \Phi_2
\end{array}
\right.
$\\
\noindent
By the I.H. on $\Phi_2$, 
with  $\map\aN(\vct{C}) \in \at_{\map\calD\aN}(t)$,
$\map\aN(\vct{C}) \models_{\map\calD\aN}  \Phi_2$ iff
$\langle \vct{C_{[1]}},\boldsymbol{\mu}(t)\rangle  \models_{\map\calD}  \Phi_2$, 
i.e., 
a.s.\\[1em]
\noindent
$
\deriv
\Pr \{ \rho \in \Paths_{\map\calD\aN}(\map\aN(\vct{C})) \mid 
\rho[0] \models_{\map\calD\aN}  \Phi_2\}
\hint{=}{See above}
\left\{
\begin{array}{l}
1, \mbox{ if } \langle \vct{C}_{[1]}, \boldsymbol{\mu}(t)\rangle \models_{\map\calD}  \Phi_2,\\\\
0, \mbox{ if } \mbox{ not }\langle \vct{C}_{[1]}, \boldsymbol{\mu}(t)\rangle   \models_{\map\calD}  \Phi_2
\end{array}
\right.
\hint{=}{Def. of $\Paths_{\map\calD}(\langle \vct{C}_{[1]}, \boldsymbol{\mu}(t)\rangle)$ and $\eta[0]$}
\Pr \{ \eta \in \Paths_{\map\calD}(\langle \vct{C}_{[1]}, \boldsymbol{\mu}(t)\rangle) \mid 
\eta[0] \models_{\map\calD}  \Phi_2\}
\hint{=}{Def. of $\eta \models_{\map\calD} \, \Phi_1 \rU{\le k}{} \Phi_2$}
\Pr \{ \eta \in \Paths_{\map\calD}(\langle \vct{C}_{[1]}, \boldsymbol{\mu}(t)\rangle) \mid 
\eta \models_{\map\calD} \, \Phi_1 \rU{\le 0}{} \Phi_2\}
$\\[1em]
\noindent
{\bf Induction step}:\\
\noindent
$
\deriv
\Pr \{ \rho \in \Paths_{\map\calD\aN}(\map\aN(\vct{C})) \mid 
\rho \models_{\map\calD\aN} \, \Phi_1 \rU{\le k+1}{} \Phi_2\}
\hint{=}{Def. $\rho \models_{\map\calD\aN} \, \Phi_1 \rU{\le k+1}{} \Phi_2$}
\left\{
\begin{array}{l}
0, \mbox{ if not } \map\aN(\vct{C})  \models_{\map\calD\aN}  \Phi_1 
\mbox{ and  not } \map\aN(\vct{C})  \models_{\map\calD\aN}  \Phi_2\\\\
1, \mbox{ if } \map\aN(\vct{C})  \models_{\map\calD\aN}  \Phi_2\\\\
\sum_{\map\aN(\vct{C}'): \map\aN(\vct{C}') \models_{\map\calD\aN} \Phi_1}\\
\mbox{\hspace{0.5in}}\hdtm\aN_{\map\aN(\vct{C}),\map\aN(\vct{C}')} \cdot
\Pr \{ \rho' \in \Paths_{\map\calD\aN}(\map\aN(\vct{C}')) \mid 
\rho' \models_{\map\calD\aN} \, \Phi_1 \rU{\le k}{} \Phi_2\},\\
\mbox{\hspace{0.5in}}\mbox{otherwise.}
\end{array}
\right.
$\\
\noindent
By the I.H. on $k$, 
noting that we are concerned only with those 
$\map\aN(\vct{C}')$ belonging to $\at_{\map\calD\aN}(t+1)$,
a.s., there is $\bar{N}$ s.t. for all
$N \geq \bar{N}$,\\[1em]
$
\Pr \{ \rho' \in \Paths_{\map\calD\aN}(\map\aN(\vct{C}')) \mid 
\rho' \models_{\map\calD\aN} \, \Phi_1 \rU{\le k}{} \Phi_2\}$ approaches\\$
\Pr \{ \eta' \in \Paths_{\map\calD}(\langle \vct{C}' _{[1]}, \boldsymbol{\mu}(t+1) \rangle) \mid 
\eta' \models_{\map\calD} \, \Phi_1 \rU{\le k}{} \Phi_2\}
$\\[1em]
Thus, \\[1em]
\noindent
$
\deriv
\Pr \{ \rho \in \Paths_{\map\calD\aN}(\map\aN(\vct{C})) \mid 
\rho \models_{\map\calD\aN} \, \Phi_1 \rU{\le k+1}{} \Phi_2\}
\hint{=}{See above}
\left\{
\begin{array}{l}
0, \mbox{ if not } \map\aN(\vct{C})  \models_{\map\calD\aN}  \Phi_1 
\mbox{ and  not } \map\aN(\vct{C})  \models_{\map\calD\aN}  \Phi_2\\\\
1, \mbox{ if } \map\aN(\vct{C})  \models_{\map\calD\aN}  \Phi_2\\\\
\sum_{\map\aN(\vct{C}'): \map\aN(\vct{C}') \models_{\map\calD\aN} \Phi_1}\\
\mbox{\hspace{0.5in}}\hdtm\aN_{\map\aN(\vct{C}),\map\aN(\vct{C}')} \cdot
\Pr \{ \eta' \in \Paths_{\map\calD}(\langle \vct{C}' _{[1]}, \boldsymbol{\mu}(t+1) \rangle) \mid 
\eta' \models_{\map\calD} \, \Phi_1 \rU{\le k}{} \Phi_2\},\\
\mbox{\hspace{0.5in}}\mbox{otherwise.}
\end{array}
\right.
$\\
The I.H. ensures that, 
a.s., there exist 
$N_1$, $N_2$ and a set of values $N_{\vct{C}'}$, for $\vct{C}'$ as in the sum above,
s.t. 
\begin{itemize}
\item
for all $N\geq N_1$,
$\map\aN(\vct{C}) \models_{\map\calD\aN} \Phi_1$ iff
$\langle \vct{C}_{[1]}, \boldsymbol{\mu}(t) \rangle \models_{\map\calD} \Phi_1$
\item
for all $N\geq N_2$,
$\map\aN(\vct{C}) \models_{\map\calD\aN} \Phi_2$ iff
$\langle \vct{C}_{[1]}, \boldsymbol{\mu}(t) \rangle \models_{\map\calD} \Phi_2$
\item
for all $N \geq N_{\vct{C}'}$,
$\map\aN(\vct{C}') \models_{\map\calD\aN} \Phi_1$ iff
$\langle \vct{C}'_{[1]}, \boldsymbol{\mu}(t+1) \rangle \models_{\map\calD} \Phi$.
\end{itemize}
%
%
Furthermore, by Corollary~\ref{MFAC}, using similar arguments as those
used for the case $\Prob{\bowtie}{p}{\lnxt \, \Phi}$, 
we get that a.s. there exists $\hat{N}$
such that, for $N \geq \hat{N}$,\\
$
\sum_{\map\aN(\vct{C}'): \map\aN(\vct{C}') \models_{\map\calD\aN} \Phi_1}\\
\mbox{\hspace{0.2in}}\hdtm\aN_{\map\aN(\vct{C}),\map\aN(\vct{C}')} \cdot
\Pr \{ \eta' \in \Paths_{\map\calD}(\langle \vct{C}' _{[1]}, \boldsymbol{\mu}(t+1) \rangle) \mid 
\eta' \models_{\map\calD} \, \Phi_1 \rU{\le k}{} \Phi_2\}
$\\\\
approaches\\\\
$
\sum_{\langle \vct{C}'_{[1]}, \boldsymbol{\mu}(t+1) \rangle \models_{\map\calD} \Phi_1}\\
\mbox{\hspace{0.2in}}\otm(\boldsymbol{\mu}(t))_{\vct{C_{[1]}},\vct{C}'_{[1]}} \cdot
\Pr \{ \eta' \in \Paths_{\map\calD}(\langle \vct{C}' _{[1]}, \boldsymbol{\mu}(t+1) \rangle) \mid 
\eta' \models_{\map\calD} \, \Phi_1 \rU{\le k}{} \Phi_2\}
$.
Thus, a.s.  for $N \geq \max\{N_1,N_2, \bar{N}_{\vct{C}'},\hat{N}\}$, with 
$\hat{N}, \bar{N}_{\vct{C}'} \geq N\vct{C}'$ for $\vct{C}'$ as above the following holds:
\begin{itemize}
\item 
$
\mbox{not } \map\aN(\vct{C})  \models_{\map\calD\aN}  \Phi_1 
\mbox{ and  not } \map\aN(\vct{C})  \models_{\map\calD\aN}  \Phi_2
$ iff\\
$
\mbox{not } \langle \vct{C}_{[1]},\boldsymbol{\mu}(t) \rangle  \models_{\map\calD}  \Phi_1 \mbox{ and  not }
\langle \vct{C}_{[1]},\boldsymbol{\mu}(t) \rangle  \models_{\map\calD}  \Phi_2
$
\item
$
\map\aN(\vct{C})  \models_{\map\calD\aN}  \Phi_2
$ iff
$
\langle \vct{C}_{[1]},\boldsymbol{\mu}(t) \rangle  \models_{\map\calD}  \Phi_2
$
\item
$
\sum_{\map\aN(\vct{C}'): \map\aN(\vct{C}') \models_{\map\calD\aN} \Phi_1}\\
\mbox{\hspace{0.2in}}\hdtm\aN_{\map\aN(\vct{C}),\map\aN(\vct{C}')} \cdot
\Pr \{ \eta' \in \Paths_{\map\calD}(\langle \vct{C}' _{[1]}, \boldsymbol{\mu}(t+1) \rangle) \mid 
\eta' \models_{\map\calD} \, \Phi_1 \rU{\le k}{} \Phi_2\}
$\\\\
approaches\\\\
$
\sum_{\langle \vct{C}'_{[1]}, \boldsymbol{\mu}(t+1) \rangle \models_{\map\calD} \Phi_1}\\
\mbox{\hspace{0.2in}}\otm(\boldsymbol{\mu}(t))_{\vct{C_{[1]}},\vct{C}'_{[1]}} \cdot
\Pr \{ \eta' \in \Paths_{\map\calD}(\langle \vct{C}' _{[1]}, \boldsymbol{\mu}(t+1) \rangle) \mid 
\eta' \models_{\map\calD} \, \Phi_1 \rU{\le k}{} \Phi_2\}
$
\end{itemize}
and by safety of $\Prob{\bowtie}{p}{\Phi_1 \rU{\le k+1}{} \Phi_2}$ we get the assert.
\hfill$\Box$
\else
For brevity, we show only the case for $\Prob{\bowtie}{p}{\lnxt \, \Phi}$; for the complete proof we refer to~\cite{La+13a}.  
By definition of $\models_{\map\calD\aN}$ and $\models_{\map\calD}$,
we have to show that, for any fixed $t$ and $\map\aN(\vct{C}) \in \at_{\map\calD\aN}(t)$,
a.s., for $N$ large enough,
$$
\Pr \{ \rho \in \Paths_{\map\calD\aN}(\map\aN(\vct{C})) \mid \rho \models_{\map\calD\aN} \lnxt \, \Phi \} \bowtie p
$$ 
iff
$$
\Pr \{ \eta \in \Paths_{\map\calD}(\langle \vct{C_{[1]}},\boldsymbol{\mu}(t)\rangle) \mid \eta \models_{\map\calD} \lnxt \, \Phi \} \bowtie p.
$$
Below, we actually prove that, for any fixed $t$ and $\map\aN(\vct{C}) \in \at_{\map\calD\aN}(t)$,
 a.s., for $N$ large enough,
the probabilities of the two sets of paths are approaching each other, which implies the assert.\\
\noindent
$
\Pr \{ \rho \in \Paths_{\map\calD\aN}(\map\aN(\vct{C})) \mid \rho \models_{\map\calD\aN} \lnxt \, \Phi \}
$
is defined as
\begin{equation}
\label{pHDN}
p\aN_{\hdtm} = \sum_{\map\aN(\vct{C}'): \map\aN(\vct{C}') \models_{\map\calD\aN} \Phi}\hdtm\aN_{\map\aN(\vct{C}),\map\aN(\vct{C}')}
\end{equation}
and
$
\Pr \{ \eta \in \Paths_{\map\calD}(\langle \vct{C_{[1]}},\boldsymbol{\mu}(t)\rangle) \mid \eta \models_{\map\calD} \lnxt \, \Phi \}
$ 
is defined as 
\begin{equation}
\label{pHD}
p(t)_{\otm}= \sum_{\vct{C}'_{[1]}: \langle \vct{C}'_{[1]}, \boldsymbol{\mu}(t+1) \rangle \models_{\map\calD} \Phi}\otm(\boldsymbol{\mu}(t))_{\vct{C_{[1]}},\vct{C}'_{[1]}}.
\end{equation}
The I.H. ensures that, 
a.s., for $N \geq \bar{N}_{\vct{C}'}$, 
$\map\aN(\vct{C}') \models_{\map\calD\aN} \Phi$ if and only if
$\langle \vct{C}'_{[1]}, \boldsymbol{\mu}(t+1) \rangle \models_{\map\calD} \Phi$,
with $\map\aN(\vct{C}') \in \at_{\map\calD\aN}(t+1)$.
In particular, it holds that, for any specific value $\bar{c}$ of $\vct{C}'_{[1]}$ above
and $\map\aN(\vct{C}') \in \at_{\map\calD\aN}(t+1,\bar{c})$, 
$\map\aN(\vct{C}') \models_{\map\calD\aN} \Phi$ if and only if
$\langle \bar{c}, \boldsymbol{\mu}(t+1) \rangle \models_{\map\calD} \Phi$,
that is: either {\em all} elements of $ \at_{\map\calD\aN}(t+1,\bar{c})$
satisfy $\Phi$ or {\em none} of them does it.
Furthermore, for such $\bar{c}$, by Corollary~\ref{MFAC}, for all $\epsilon_{\bar{c}}>0$ there exists
$N_{\bar{c}}$ s.t. for all $N \geq N_{\bar{c}}$  $$
\left|
\left(
\sum_{\langle \bar{c},\overline{\vct{m}}\rangle: {\at_{\map\calD\aN}}(t+1,\bar{c})}
\hdtm\aN_{\map\aN(\vct{C}),\langle \bar{c},\overline{\vct{m}}\rangle}
\right)
-
\otm(\boldsymbol{\mu}(t))_{\vct{C}_{[1]},\bar{c}}
\right|
< 
\epsilon_{\bar{c}}
$$
(see Remark~\ref{APPR}). So, for any $\epsilon > 0$ 
there exists an $\hat{N}$ larger than any of such
$\bar{N}_{\vct{C}'}$ and $N_{\bar{c}}$, such that for  all $N \geq \hat{N}$  
$
\left| p\aN_{\hdtm} - p(t)_{\otm} \right| < \epsilon
$
i.e.  the value $p\aN_{\hdtm}$ of sum~(\ref{pHDN}) approaches the value $p(t)_{\otm}$ of sum~(\ref{pHD}).
Finally, safety of $\Prob{\bowtie}{p}{\lnxt \, \Phi}$, implies that the value $p(t)_{\otm}$ of~(\ref{pHD}) is 
different from $p$. If $p(t)_{\otm} > p$ then we can choose $\epsilon$ small enough that also 
$p\aN_{\hdtm} > p$ and, similarly,  if $p(t)_{\otm} < p$, we get also $p\aN_{\hdtm} < p$, which proves the assert.\\
\hfill$\Box$
\fi
\end{proof}
Finally, using Lemma~\ref{CorrAbs} we get the following
\begin{corollary}
Under the assumptions of Theorem 4.1 of~\cite{BMM07}, for all safe formulas $\Phi$,
for any fixed $t$ and  
$\vct{C}^{\aN} \in \at_{\calD\aN}(t)$, 
almost surely,  for  $N$ large enough
$
\vct{C}^{\aN} \models_{\calD\aN} \Phi \mbox{ iff } 
\langle \vct{C}^{\aN}_{[1]}, \boldsymbol{\mu}(t) \rangle \models_{\map\calD} \Phi.
\endstat
$
\end{corollary}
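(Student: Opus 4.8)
The plan is to obtain the statement as an immediate composition of Lemma~\ref{CorrAbs} with Theorem~\ref{CorrAprx}, exploiting that the former is an \emph{exact} correspondence holding for every $N$ (no probabilistic content, no largeness hypothesis), whereas the latter supplies the only asymptotic, almost-sure ingredient. First I would invoke Lemma~\ref{CorrAbs}: for every $N>0$ and every global state $\vct{C}\aN$ we have $\vct{C}\aN \models_{\calD\aN} \Phi$ if and only if $\map\aN(\vct{C}\aN) \models_{\map\calD\aN} \Phi$. Because this holds for all $N$ simultaneously and with probability one, it may be used freely inside the scope of the ``almost surely, for $N$ large enough'' quantifier occurring in the claim.

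Next I would verify that the hypothesis of Theorem~\ref{CorrAprx} is met for the state $\map\aN(\vct{C}\aN)$. We are given $\vct{C}\aN \in \at_{\calD\aN}(t)$, and the observation recorded just before Lemma~\ref{CorrAbs} — namely that $\vct{C} \in \at_{\calD\aN}(t)$ iff $\map\aN(\vct{C}) \in \at_{\map\calD\aN}(t)$, which follows directly from $\map\aN(\at_{\calD\aN}(t)) = \at_{\map\calD\aN}(t)$ — yields $\map\aN(\vct{C}\aN) \in \at_{\map\calD\aN}(t)$. Hence Theorem~\ref{CorrAprx} applies to the safe formula $\Phi$ and gives: almost surely, for $N$ large enough, $\map\aN(\vct{C}\aN) \models_{\map\calD\aN} \Phi$ if and only if $\langle \vct{C}\aN_{[1]}, \boldsymbol{\mu}(t) \rangle \models_{\map\calD} \Phi$.

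Finally I would chain the two biconditionals: on the almost-sure event and for all $N$ above the threshold provided by Theorem~\ref{CorrAprx}, we have $\vct{C}\aN \models_{\calD\aN} \Phi$ iff $\map\aN(\vct{C}\aN) \models_{\map\calD\aN} \Phi$ (Lemma~\ref{CorrAbs}) iff $\langle \vct{C}\aN_{[1]}, \boldsymbol{\mu}(t) \rangle \models_{\map\calD} \Phi$ (Theorem~\ref{CorrAprx}), which is exactly the asserted equivalence. There is essentially no genuine obstacle; the only point deserving a moment's care is the bookkeeping of the quantifier order — the almost-sure event and the threshold $\bar N$ are inherited verbatim from Theorem~\ref{CorrAprx}, precisely because the intermediate step contributed by Lemma~\ref{CorrAbs} is uniform in $N$ and holds surely.
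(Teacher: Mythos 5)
Your proposal is correct and follows exactly the route the paper takes: the corollary is stated immediately after the phrase ``using Lemma~\ref{CorrAbs} we get the following'', i.e.\ it is obtained by chaining the exact, $N$-uniform equivalence of Lemma~\ref{CorrAbs} with the asymptotic almost-sure equivalence of Theorem~\ref{CorrAprx}, just as you do. Your additional check that $\vct{C}\aN \in \at_{\calD\aN}(t)$ implies $\map\aN(\vct{C}\aN) \in \at_{\map\calD\aN}(t)$ is a sensible piece of bookkeeping that the paper leaves implicit.
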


\paragraph{Fast  local model-checking} 
On-the-fly {\em fast} PCTL model-checking on the limit DTMC $\map\calD(t)$ is obtained by instantiating \textsf{proc} with  
$\sc{\os} \times \us{S}
$
and \textsf{lab} with 
$\sls_1 \cup \sls_g$;
$\nxt$ is instantiated with $\nxt_{\map\calD}$ defined as follows:
$$
\nxt_{\map\calD} (\langle c,\vct{m} \rangle)=
[(\langle c',\vct{m}\cdot\otm(\vct{m}) \rangle, p') \mid \otm(\vct{m})_{c,c'} = p' > 0],
$$
\noindent
with $\otm(\vct{m})_{c,c'}$ as in Theorem 4.1 of~\cite{BMM07}; 
$\lab$ is instantiated 
as expected:
$
\lab_{\map\calD}(\langle c,\vct{m} \rangle,a)= a \in \slf_{\map\calD}(\langle c,\vct{m} \rangle)
$. 
The instantiation is implemented in \FlyFast{.}
\begin{remark}
Although in the hypothesis of the theorem we require formulae safety, 
for all practical purposes, it is actually sufficient
to require that 
$$
\Pr \{ \eta \in \Paths_{\map\calD}(s') \mid \eta \models_{\map\calD} \varphi \} \not= p
$$
for all formulae $\Prob{\bowtie}{p}{\varphi}$ and states $s'$ such that 
$\checkPath(s',\varphi)$ is computed during the execution of $\checkDtmc(s,\Phi)$(see Table~\ref{alg:check_dtmc}).
This (weaker) safety check is readily added to the algorithm.
$\endstat$
\end{remark}
\begin{example}[\FlyFast{} results]
\label{ex:fmcatwork} 
Fig.~\ref{tab:fastmcatwork} shows the result of \FlyFast{} on the model of Ex.~\ref{ex:epidemicmodel} for the first object of a large population of objects, each initially in state $S$.
In Fig.~\ref{tab:fastmcatwork} (left) the same properties are considered as in Ex.~\ref{ex:properties}.
The analysis takes less than a second and is {\em insensitive} to the total population size. Fig.~\ref{tab:fastmcatwork} (right) shows how the probability measure of the set of paths satisfying formula 
$~true~\rU{\le k} ( !e \wedge !i\wedge \Prob{>}{0.3}{~true~\rU{\le 5}~i~})$  of property
 \emph{P3} on page~\pageref{P3}, (for $k=3$), changes for initial time $t0$ varying from $0$ to $10$.\\

\end{example}
\vspace{-2em}
\begin{figure}[tbp]
\begin{tabular}{cc}
\begin{minipage}{5.5cm}
\includegraphics[scale=0.25]{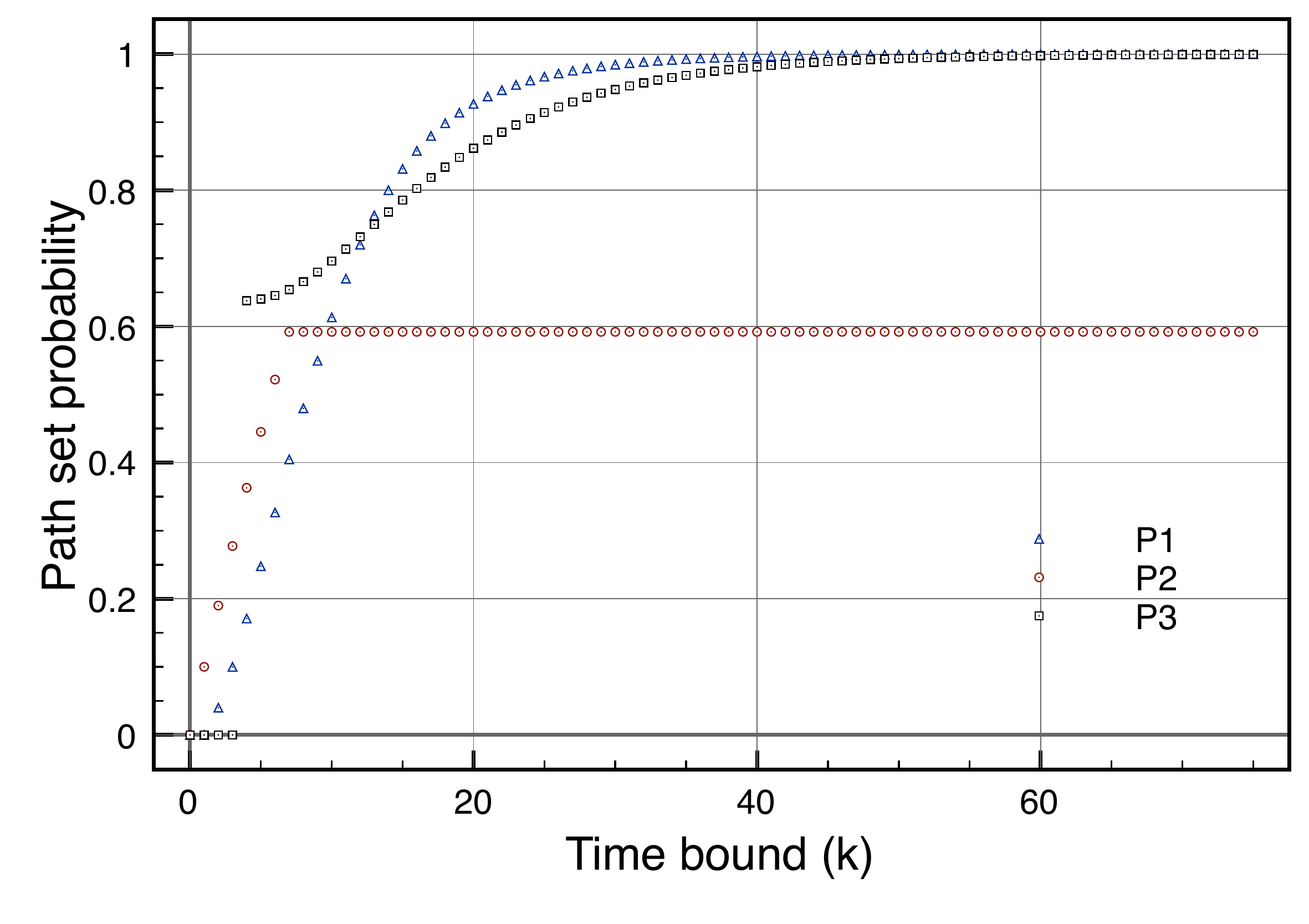} 
\end{minipage} &
\begin{minipage}{5.5cm}
\includegraphics[scale=0.25]{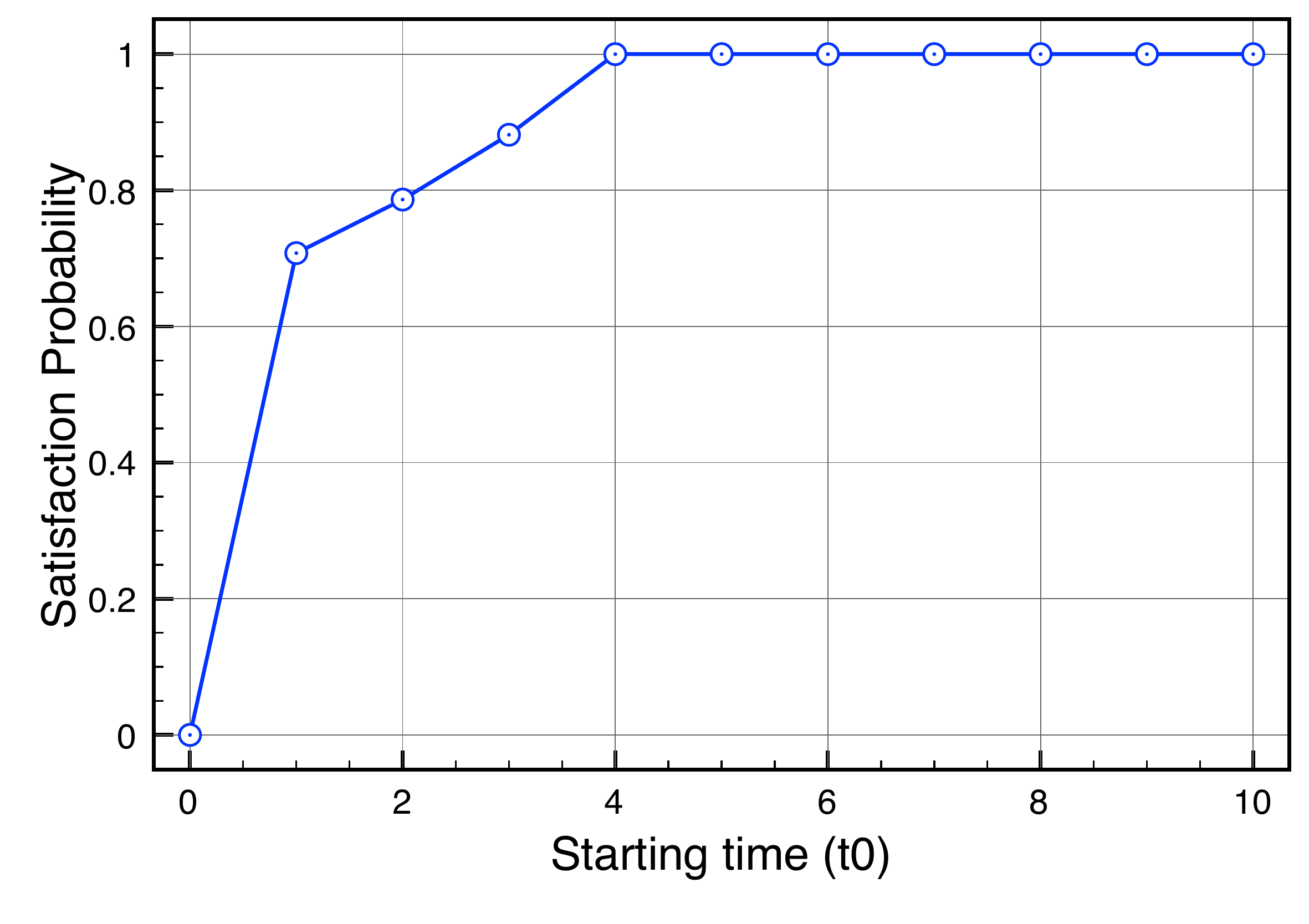} 
\end{minipage} 
\end{tabular}
\caption{Fast model-checking results.}
\label{tab:fastmcatwork}
\end{figure}

\section{Conclusions and Future Work}
\label{CONC:FW}

In this paper we have presented a fast PCTL model-checking approach that builds upon
{\em local, on-the-fly} model-checking and {\em mean-field} approximation, allowing
for scalable analysis of selected objects in the context of very large systems. The model-checking algorithm is parametric w.r.t. the specific semantic model of interest. We presented related correctness results, an example of application of a prototype implementation and briefly discussed complexity of the algorithm. The results can be trivially extended in order to consider multiple
selected objects. Following approaches similar to those presented in~\cite{BMM07},
we plan to extend our work to heterogeneous systems and systems with memory.
We are interested in extensions that address spatial distribution of objects as well as more expressive logics, combining local and global properties, and languages (e.g.~\cite{MNS11,Ko+13}) and to study the exact relation between mean field convergence results for continuous interleaving models and discrete, time-synchronous ones.

\bibliographystyle{splncs03}
\bibliography{abbr,opmc,xref}

\end{document}